\documentclass[reqno]{article}
\pdfoutput=1

\usepackage[utf8]{inputenc}
\usepackage[T1]{fontenc}
\usepackage{microtype}
\usepackage[noend]{algorithmic}
\usepackage{amsmath, amssymb, amsthm}
\usepackage{thmtools}
\usepackage{thm-restate}
\usepackage{blindtext}
\usepackage{hyperref} 
\usepackage{cleveref}
\usepackage{comment}
\usepackage{enumerate}
\usepackage{mathtools}
\usepackage{subcaption}
\usepackage[dvipsnames]{xcolor}
\usepackage{dsfont}
\usepackage{xspace}

\usepackage{microtype}
\usepackage{graphicx}
\usepackage{booktabs} 

\usepackage{afterpage}



\crefformat{equation}{(#2#1#3)}

\DeclareMathOperator*{\argmax}{arg\,max}

\newcommand{\Prob}[2]{\Pr_{#1}\parens*{#2}}

\newcommand{\E}{\mathbb{E}}

\newcommand{\Exp}[2]{\E_{#1}\bracks*{#2}}
\newcommand{\ExpCond}[3]{\E_{#1}\bracks*{#2 \;\middle|\; #3}}
\newcommand{\AdaptiveSampling}{\textsc{Threshold-Sampling}\xspace}
\newcommand{\NMTS}{\textsc{Non-Monotone-Threshold-Sampling}\xspace}
\newcommand{\NonmonotoneMaximization}{\textsc{Adaptive-Nonmonotone-Max}\xspace}
\newcommand{\UnconstrainedMaximization}{\textsc{Unconstrained-Max}\xspace}

\newcommand{\EstimateMean}{\textsc{Reduced-Mean}\xspace}

\newcommand{\RandomLazyGreedy}{\textsc{Random-Lazy-Greedy-Improved}\xspace}
\newcommand{\Blits}{\textsc{Blits}\xspace}
\newcommand{\Fantom}{\textsc{Fantom}\xspace}
\newcommand{\Greedy}{\textsc{Greedy}\xspace}
\newcommand{\Random}{\textsc{Random}\xspace}
\newcommand{\DEF}{\stackrel{\textnormal{\tiny\sffamily def}}{=}}

\newcommand{\R}{\mathbb{R}}
\newcommand{\OPT}{\textnormal{OPT}}
\newcommand{\ALG}{\textnormal{ALG}}
\newcommand{\poly}{\textnormal{poly}}
\newcommand{\Gap}{\textnormal{Gap}}
\newcommand{\todo}[1]{\noindent\textbf{\color{red} TODO: #1}}
\newcommand{\ind}{\mathds{1}}
\newcommand{\UnconstrainedApprox}{\alpha}
\newcommand{\change}[1]{{\color{black}#1}}
\newcommand{\morteza}[1]{{\color{black}#1}}


 
 \newcommand{\cD}{\mathcal{D}}





\DeclarePairedDelimiter{\abs}{\lvert}{\rvert}
\DeclarePairedDelimiter{\set}{\{}{\}}
\DeclarePairedDelimiter{\parens}{(}{)}
\DeclarePairedDelimiter{\bracks}{[}{]}
\DeclarePairedDelimiter{\floor}{\lfloor}{\rfloor}
\DeclarePairedDelimiter{\ceil}{\lceil}{\rceil}

\theoremstyle{plain}
\newtheorem{theorem}{Theorem}[section]
\newtheorem{lemma}[theorem]{Lemma}

\theoremstyle{definition}
\newtheorem{definition}[theorem]{Definition}

\usepackage[accepted]{icml2019}

\icmltitlerunning{
  Non-monotone Submodular Maximization with 
  Nearly Optimal Adaptivity and Query Complexity
}

\begin{document}

\onecolumn{
\icmltitle{
  Non-monotone Submodular Maximization with
  Nearly Optimal Adaptivity and Query Complexity
}

\begin{icmlauthorlist}
\icmlauthor{Matthew Fahrbach}{gt}
\icmlauthor{Vahab Mirrokni}{google}
\icmlauthor{Morteza Zadimoghaddam}{google}
\end{icmlauthorlist}

\icmlaffiliation{gt}{Georgia Institute of Technology}
\icmlaffiliation{google}{Google Research}

\icmlcorrespondingauthor{Matthew Fahrbach}{matthew.fahrbach@gatech.edu}

\vskip 0.3in
}

\printAffiliationsAndNotice{} 

\begin{abstract}
Submodular maximization is a general optimization problem with a wide range of
applications in machine learning (e.g., active learning, clustering, and
feature selection).
In large-scale optimization, the parallel running time of an algorithm
is governed by its adaptivity, which measures the number of
sequential rounds needed if the algorithm can execute polynomially-many
independent oracle queries in parallel.
While low adaptivity is ideal, it is not sufficient for an
algorithm to be efficient in practice---there are many applications of
distributed submodular optimization where the
number of function evaluations becomes prohibitively expensive.
Motivated by these applications, we study the
adaptivity and query complexity of submodular maximization.
In this paper,
we give the first constant-factor approximation algorithm for
maximizing a non-monotone submodular function subject to a cardinality
constraint~$k$
that runs in $O(\log(n))$ adaptive rounds
and makes $O(n \log(k))$ oracle queries in expectation.
In our empirical study,
we use three real-world applications to compare our algorithm with
several benchmarks for non-monotone submodular maximization.
The results demonstrate that our algorithm finds competitive solutions using
significantly fewer rounds and queries.
\end{abstract}

\section{Introduction}
\label{sec:introduction}

Submodular set functions are a powerful tool for modeling real-world
problems because they naturally exhibit the property of diminishing returns.
Several well-known examples of submodular functions include graph cuts,
entropy-based clustering, coverage functions,
and mutual information.
As a result, submodular functions have been increasingly used in
applications of machine learning such as
data summarization~\cite{simon2007scene,sipos2012temporal,tschiatschek2014learning},
feature selection~\cite{DK08,KEDNG17},
and recommendation systems~\cite{el2011beyond}.
While some of these applications involve maximizing monotone submodular functions,
the more general problem of non-monotone
submodular maximization has also been used
extensively~\cite{feige2011maximizing,buchbinder2014submodular,mirzasoleiman2016fast,balkanski2018non,norouzi2018beyond}.
Some specific applications of non-monotone submodular maximization include
image summarization and movie recommendation~\cite{mirzasoleiman2016fast}, and
revenue maximization in viral marketing~\cite{HMS08}.
Two compelling uses of non-monotone submodular maximization algorithms are:
\begin{itemize}
\item
Optimizing objectives that are a monotone submodular function minus a linear cost
function that penalizes the addition of more elements to the set (e.g., the coverage and
diversity trade-off). This appears in facility location problems where opening
centers is expensive and in exemplar-based clustering~\cite{dueck2007non}.
\item
Expressing learning problems such as feature selection
using weakly submodular functions~\cite{DK08,KEDNG17,elenberg2018restricted,qian2019fast}.
One possible source of non-monotonicity in this context is overfitting to
training data by selecting too many representative
features~(e.g., Section 1.6 and Corollary 3.19 in \citet{mohri2018foundations}).
Although most of these learning problems have not yet been rigorously modeled as
non-monotone submodular functions, there has been a recent surge of interest and
a substantial amount of momentum in this direction.
\end{itemize}


The literature on submodular optimization typically assumes access to an oracle
that evaluates the submodular function.
In practice, however, oracle queries
may take a long time to process.
For example, the log-determinant of submatrices of a positive semi-definite
matrix is a submodular function that is notoriously expensive to
compute~\cite{kazemi2018scalable}.
Therefore, our goal when designing distributed algorithms is to minimize
the number of rounds where the algorithm communicates with the oracle.
This motivates the notion of the \emph{adaptivity complexity} of submodular optimization,
first investigated in~\citet{BS18}.
In this model of computation, the algorithm can ask polynomially-many
independent oracle queries all together in each round.

In a wide range of machine learning optimization problems, the objective
functions can only be estimated through oracle access to the function.
In many instances, these oracle evaluations are a new time-consuming
optimization problem that we treat as a black box (e.g., hyperparameter optimization).
Since our goal is to optimize the objective function using as
few rounds of interaction with the oracle as possible,
insights and algorithms developed in this adaptivity complexity framework
can have a deep impact on distributed computing for machine learning
applications in practice.
Further motivation for the importance of this computational model is given in~\citet{BS18}.

While the number of adaptive rounds is an important quantity to minimize,
the computational complexity
of evaluating oracle queries also motivates the design of algorithms that are
efficient in terms of the total number of oracle queries.
An algorithm typically needs to make at least a constant number of queries per
element in the ground set to achieve a constant-factor approximation. 
In this paper, we study the adaptivity complexity and the total number of
oracle queries that are needed to guarantee a constant-factor
approximation when maximizing a non-monotone submodular function.

\textbf{Results and Techniques.}
Our main result is a distributed algorithm for maximizing a non-monotone
submodular function subject to a cardinality constraint $k$ that achieves an
expected $(0.039-\varepsilon)$-approximation
in $O(\log(n)/\varepsilon)$ adaptive rounds
using $O(n\log(k)/\varepsilon^2)$ expected function evaluation queries.
To the best of our knowledge, this is the first constant-factor approximation
algorithm with nearly-optimal adaptivity for the general problem of maximizing
non-monotone submodular functions. 
The adaptivity complexity of our algorithm is optimal up to a
$\Theta(\log\log(n))$ factor by the lower bound in~\citet{BS18}.

The building blocks of our algorithm are
\change{(1)} the \AdaptiveSampling subroutine in~\citet{fahrbach2019submodular},
which returns a subset of high-valued elements
in $O(\log(n)/\varepsilon)$ adaptive rounds,
and \change{(2)} the unconstrained submodular maximization algorithm
in~\citet{chen2018unconstrained}, \change{which} gives a $(1/2-\varepsilon)$-approximation
in $O(\log(1/\varepsilon)/\varepsilon)$ adaptive rounds.
We modify \AdaptiveSampling to terminate early if its pool of candidate elements
becomes too small \change{to guarantee} that each element is not chosen with at least constant probability.
This property has been shown to be useful for
obtaining constant-factor approximations for non-monotone
submodular function maximization~\cite{buchbinder2014submodular}.
Next, we run unconstrained maximization on the remaining set of
high-valued candidates if its size is at most $3k$, downsample accordingly,
and output the better of the two solutions.
Our analysis shows how to optimize the constant parameters to balance between
the two behaviors.
Last, since \AdaptiveSampling requires an input close to $\OPT/k$,
we find an interval containing $\OPT$, try logarithmically-many input
thresholds in parallel, and return the solution with maximum value.
We note that improving the bounds for $\OPT$ via low-adaptivity preprocessing
can reduce the total query complexity
as shown in~\citet{fahrbach2019submodular}.

\begin{table*}
\caption{
  Independent and concurrent works 
  for low-adaptivity non-monotone submodular maximization
  subject to a cardinality constraint.}
\label{comparison-table}
\begin{center}
\begin{small}
\begin{sc}
  \begin{tabular}{lccc}
\toprule
Algorithm & Approximiation & Adaptivity & Queries \\
\midrule
    \citet{buchbinder2016comparing} & $1/e - \varepsilon$ & $O(k)$ & $O(n)$ \\
  \citet{balkanski2018non} & $1/(2e) - \varepsilon $ & $O(\log^2(n))$ & $O(\OPT^2 n \log^2(n) \log(k) )$ \\
    \citet{chekuri2018parallelizing} & $3-2\sqrt{2}-\varepsilon$ & $O(\log^2(n))$ & $O(nk^4 \log^2(n))$\\
    \citet{ene2018submodular} & $1/e-\varepsilon$ & $O(\log^2(n))$ & $O(n k^2 \log^2(n))$\\
  This paper & $0.039 - \varepsilon$ & $O(\log(n))$ & $O(n\log(k))$ \\
    \change{\citet{amanatidis2021submodular}} & \change{$0.171 - \varepsilon$} & \change{$O(\log(n))$} & \change{$O(nk \log(n)\log(k))$} \\
   \change{\citet{chen2022practical}} & \change{$1/6 - \varepsilon$} & \change{$O(\log(n))$} & \change{$O(n\log(k))$} \\
   \change{\citet{chen2022practical}} & \change{$0.193 - \varepsilon$} & \change{$O(\log^2(n))$} & \change{$O(n\log(k))$} \\
\bottomrule
\end{tabular}
\end{sc}
\end{small}
\end{center}
\end{table*}

\textbf{Related Works.}
Submodular maximization has garnered a significant amount of attention in the
distributed and streaming literature because of its role in large-scale data
mining
\cite{spaa-LMSV11,nips13,badanidiyuru2014streaming,KMVV13, MirrokniZadim2015,
barbosa2015power,alina2,liu2018submodular}.
However, in many distributed models (e.g., the Massively Parallel Computation
model), round complexity often captures a different notion than adaptivity
complexity.
For example, a constant-factor approximation is achievable in two rounds of
computation~\cite{MirrokniZadim2015},
but it is impossible to compute a constant-factor approximation in
$O(\log(n)/\log\log(n))$ adaptive rounds~\cite{BS18}.
Since adaptivity measures the communication complexity with a function
evaluation oracle, a round in most distributed models can have arbitrarily high
adaptivity.

The first set of related works with low adaptivity focus on maximizing
\emph{monotone} submodular functions subject to a cardinality constraint~$k$.
In \citet{BS18}, the authors show that a $(1/3-\varepsilon)$-approximation is
achievable in $O(\log(n))$ rounds.
In terms of parallel running time, this is exponentially faster than
the celebrated greedy algorithm which gives a $(1-1/e)$-approximation
in $O(k)$ rounds~\cite{Nemhauser_Wolsey_Fisher78}.
Subsequently, \citet{BRS18,EN18,fahrbach2019submodular} independently
designed $(1-1/e-\varepsilon)$-approximation algorithms with $O(\log(n))$
adaptivity. These works also show that only $O(n)$ oracle queries are
needed in expectation.
Recent works have also investigated the adaptivity of the multilinear
relaxation of monotone submodular functions subject to packing
constraints~\cite{chekuri2019submodular} and
the submodular cover problem~\cite{agarwal2019stochastic}.


While the general problem of maximizing a (not necessarily monotone)
submodular function has been studied extensively
\cite{LMNS10,feige2011maximizing,gharan2011submodular,buchbinder2014submodular},
noticeably less progress has been made.
For example, the best achievable approximation for the centralized maximization
problem is unknown but in the range
$[0.385,0.491]$~\cite{buchbinder2016constrained,gharan2011submodular}.
However, some progress has been made for the adaptive complexity of this
problem, all which has been done independently and concurrently with an
earlier version of this paper.
Recently, \citet{balkanski2018non} designed a
parallel algorithm for non-monotone submodular maximization subject to
a cardinality constraint that gives a $(1/(2e)-\varepsilon)$-approximation
in~$O(\log^2(n))$ adaptive rounds.
Their algorithm estimates the expected marginal gain of random
subsets, and therefore the number of function evaluations it needs to achieve
provable guarantees is $O(\OPT^2 n \log^2(n)\log(k))$.
We acknowledge that the query complexity can likely be improved via
normalization or estimating an indicator random variable instead. 
The works of~\citet{chekuri2018parallelizing} and \citet{ene2018submodular}
give constant-factor approximation algorithms with $O(\log^2(n))$ adaptivity
for maximizing non-monotone submodular functions subject to matroid constraints.
Their approaches use multilinear extensions
and require $\Omega(n k^2 \log^2(n))$ function evaluations 
to simulate an oracle for $\nabla f$ with high enough accuracy.
There have also been significant advancements in low-adaptivity algorithms
for the problem of unconstrained submodular
maximization~\cite{chen2018unconstrained,ene2018parallel}.

\change{
In a subsequent work,
\citet{kuhnle2021nearly} improved on our $(0.039-\varepsilon)$-approximation ratio
and gave a $(1/6 - \varepsilon)$-approximation algorithm
that uses $O(\log(n))$ adaptive rounds and $O(n \log (k))$ queries.
Furthermore, he also designed an algorithm that achieves a
$(0.193-\varepsilon)$-approximation
in $O(\log^2(n))$ adaptive rounds
using $O(n \log (k))$ queries.
Both our original algorithm in the ICML 2019 manuscript
and the two algorithms in \citet{kuhnle2021nearly}, however,
relied on the \textsc{Threshold-Sampling} subroutine in~\citet{fahrbach2019submodular},
whose guarantees only hold for \emph{monotone} submodular functions.
After this was brought to the authors' attention,
our updated manuscript and the new work of~\citet{chen2022practical}
independently fix the threshold sampling bug with a slight change,
so all of these low-adaptivity non-monotone submodular results continue to hold.
}
\section{Preliminaries}
\label{sec:preliminaries}

For any set function $f : 2^N \rightarrow \R_{\ge 0}$
and subsets ${S,T \subseteq N}$,
let $\Delta(T, S) \DEF f(S \cup T) - f(S)$ denote the \emph{marginal gain} of~$f$
at $T$ with respect to $S$.
We refer to $N$ as the ground set and let $|N| = n$.
A set function ${f : 2^N \rightarrow \R_{\ge 0}}$ is \emph{submodular} if for all
$S \subseteq T \subseteq N$ and any $x \in N \setminus T$ we have
$\Delta(x, S) \ge \Delta(x, T)$, where the marginal gain
notation is overloaded for singletons.
A set function is \emph{monotone} if for all subsets $S \subseteq T \subseteq N$
we have $f(S) \le f(T)$.
In this paper, we investigate distributed algorithms for maximizing submodular
functions subject to a cardinality constraint, including
those that are \emph{non-monotone}.
Let $S^*$ be a solution set to the 
maximization problem $\max_{S \subseteq N} f(S)$ subject to
the cardinality constraint $|S| \le k$, and
let $\mathcal{U}(A, t)$ denote the uniform distribution over all subsets of $A$
of size $t$. 

Our algorithms take as input an \emph{evaluation oracle} for $f$,
which for every query $S \subseteq N$ returns $f(S)$ in $O(1)$ time.
Given an evaluation oracle, we define the \emph{adaptivity} of an algorithm
to be the minimum number of rounds needed such that in each round the algorithm
makes $O(\poly(n))$ independent queries to the evaluation oracle.
Queries in a given round may depend on the answers of queries from previous
rounds but not the current round.
We measure the parallel running time of an algorithm by its
adaptivity.

One of the inspirations for our algorithm is the following lemma, which is
remarkably useful for achieving a constant-factor approximation for
general submodular functions.

\begin{lemma}{\cite{buchbinder2014submodular}}
\label{lem:empty-lower-bound}
Let ${f : 2^N \rightarrow \R_{\ge 0}}$ be submodular. Denote by $A(p)$ a
random subset of $A$ where each element appears with probability at most $p$
(not necessarily independently). Then,
$\E[f(A(p))] \ge (1-p)f(\emptyset)$.
\end{lemma}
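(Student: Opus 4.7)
The plan is to reduce to distributions supported on a chain of subsets and verify the bound directly in that case. Let $\mu$ denote the distribution of $A(p)$ on $2^A$, so $\E[f(A(p))] = \sum_{S \subseteq A} \mu(S) f(S)$ is a linear functional of $\mu$. The constraints that $\mu$ is a probability distribution with $\Pr[x \in A(p)] \le p$ for every $x \in A$ define a compact convex polytope, so the infimum of $\sum_S \mu(S) f(S)$ is attained; it suffices to verify the inequality for a minimizer $\mu^*$.

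I claim $\mu^*$ can be taken so that its support is a chain of subsets of $A$. Suppose instead that two incomparable sets $S, T$ both lie in the support. Apply the \emph{uncrossing} operation: transfer $\epsilon \DEF \min(\mu^*(S), \mu^*(T))$ units of mass from $\{S, T\}$ to $\{S \cup T, S \cap T\}$. Because $\ind[x \in S] + \ind[x \in T] = \ind[x \in S \cup T] + \ind[x \in S \cap T]$ for every element $x$, marginals are preserved, and submodularity gives $f(S \cup T) + f(S \cap T) \le f(S) + f(T)$, so the objective $\sum_S \mu(S) f(S)$ does not increase. To guarantee termination, additionally select $\mu^*$ to maximize the potential $\Psi(\mu) \DEF \sum_S \mu(S) \binom{|S|}{2}$ among all minimizers. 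Using $|S \cup T| + |S \cap T| = |S| + |T|$, a short calculation yields
\[
\Psi(\mu') - \Psi(\mu^*) = \epsilon \, (|S| - |S \cap T|)(|T| - |S \cap T|),
\]
which is strictly positive when $S, T$ are incomparable, contradicting maximality of $\Psi(\mu^*)$. Hence the support of $\mu^*$ must already be a chain.

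Enumerate the chain as $S_0 \subsetneq S_1 \subsetneq \cdots \subsetneq S_k$. The case $p \ge 1$ is trivial since $(1 - p) f(\emptyset) \le 0 \le \E[f(A(p))]$, so assume $p < 1$. Then $S_0 = \emptyset$, because otherwise any $x \in S_0$ would have marginal probability $1 > p$. For any $x \in S_1$, $\Pr[x \in A(p)] = \sum_{i \ge 1} \mu^*(S_i) = 1 - \mu^*(\emptyset)$, so the marginal constraint forces $\mu^*(\emptyset) \ge 1 - p$. Finally, using $f \ge 0$,
\[
\E[f(A(p))] \ge \sum_{i=0}^k \mu^*(S_i) f(S_i) \ge \mu^*(\emptyset) f(\emptyset) \ge (1 - p) f(\emptyset).
\]

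The main obstacle is making the uncrossing process terminate rigorously and handling the dependent-sampling case without over-conditioning on individual coordinates (which can blow up marginal probabilities and break a naive induction on $|A|$); the auxiliary potential $\Psi$ sidesteps both issues via a standard strict-increase argument that forces the minimizer to be chain-supported in one shot.
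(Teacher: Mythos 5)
Your proof is correct. Note that the paper itself does not prove this lemma---it is quoted verbatim from \cite{buchbinder2014submodular}---so the comparison is against the standard argument in that reference rather than an in-paper proof. Your route is genuinely different and self-contained: you lower bound $\E[f(A(p))]$ by the minimum of the linear functional $\mu \mapsto \sum_S \mu(S) f(S)$ over the compact polytope of distributions with marginals at most $p$, and then show via uncrossing that some minimizer is chain-supported. The key steps check out: the uncrossing move preserves marginals and total mass, submodularity keeps the objective from increasing, and your potential calculation is exact, since with $s=|S|$, $t=|T|$, $a=|S\cap T|$ one has $\binom{s+t-a}{2}+\binom{a}{2}-\binom{s}{2}-\binom{t}{2}=(s-a)(t-a)>0$ for incomparable $S,T$, so a $\Psi$-maximal minimizer (which exists because the set of minimizers is a nonempty compact face and $\Psi$ is linear) must be chain-supported; the endgame ($S_0=\emptyset$ when $p<1$, hence $\mu^*(\emptyset)\ge 1-p$, hence the bound using $f\ge 0$) is also fine, with the degenerate case where the support is $\{\emptyset\}$ holding trivially. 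In effect you have re-derived, by hand, the fact that for submodular $f$ the convex closure is attained by chain-supported distributions, i.e.\ coincides with the Lov\'asz extension; the cited source's argument is shorter because it uses exactly that machinery (evaluate the Lov\'asz extension at the marginal vector $x$ with $x_u\le p$: the level set $\{u: x_u\ge\theta\}$ is empty whenever $\theta>p$, giving $(1-p)f(\emptyset)$ immediately), or alternatively a brief induction on the ground set. Your version trades that brevity for an elementary, fully explicit argument that avoids invoking the Lov\'asz extension, at the cost of the compactness/extremal bookkeeping you carry out.
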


In our case, if $S$ is the output of the algorithm and the probability of any element
appearing in~$S$ is bounded away from~$1$, we can
analyze the submodular function
$g : 2^N \rightarrow \R_{\ge 0}$ defined by
$g(S) = f(S \cup S^*)$
to lower bound $\Exp{}{f(S \cup S^*)}$ in terms of $\OPT = f(S^*)$
since $g(\emptyset) = f(S^*)$.

\subsection{\change{Threshold-Sampling Algorithms}}
\label{sec:threshold}

We start with a high-level description of the \AdaptiveSampling algorithm
in~\citet{fahrbach2019submodular}. \change{This algorithm was originally designed for \emph{monotone} submodular functions,
but after a small change can become
the main subroutine of our \emph{non-monotone} maximization algorithm.}
For a threshold~$\tau$, \AdaptiveSampling iteratively builds a solution
set~$S$ over $O(\log(n)/\varepsilon)$ adaptive rounds
and maintains a set of unchosen candidate elements~$A$.
Initially, the solution set is empty and all elements
are candidates (i.e., $S = \emptyset$ and $A = N$).
In each round, the algorithm starts by discarding elements in $A$ whose
marginal gain to the current solution~$S$ is less than the threshold $\tau$.
\change{Then the algorithm efficiently finds the largest cardinality $t^*$
such that if we sample $t^*$ elements from $A$ uniformly at random and add them to $S$ 
in a random order, each addition yields at least a marginal value of $\tau$ with probability at least $1 - \varepsilon$.

For any two sets $A$ and $S$, if the elements in $A$ are added to $S$ in random order, the probability that the $t$-th addition gives a marginal value of at least $\tau$ is a non-increasing function in $t$ by submodularity \citep[Lemma 3.4]{fahrbach2019submodular}. Thus, the notion of $t^*$ is well-defined.
Estimating this probability for any value of $t$ can be done with a few samples and hence an efficient number of oracle queries. 
Trying all values of $t$, however, increases the query complexity drastically, so we compute an estimate $\hat{t} = \floor{(1+\varepsilon)^i} \approx t^*$ by trying a geometrically increasing series of values for $t$. 
At the end of each round, the algorithm samples a set $T \sim U(A, \hat{t})$
and updates the current solution to be $S \cup T$.}


The random choice of $T$ in \AdaptiveSampling has two beneficial effects.
\change{First, it ensures that each added element has marginal value at least 
$\tau$ with probability at least $1-\varepsilon$.
Second, the maximality of $\hat{t}$ implies that an expected $\varepsilon$-fraction 
of candidates are filtered out of $A$ in the next round.
}
Therefore, the number of elements that the algorithm considers in
each round decreases geometrically. It follows that
$O(\log(n)/\varepsilon)$ rounds suffice to guarantee 
that when the algorithm terminates, we either have $|S| = k$ or
the marginal gains of all the elements are below the threshold
\change{with high probability.}

\change{
For a monotone function, this approach allows us to achieve the desired approximation guarantee because most,
i.e., at least a $(1-\varepsilon)$-fraction,
of the added elements provide a marginal value of at least $\tau$.
Further, the remaining $\varepsilon$-fraction of elements in $S$
cannot decrease the value of the set because of monotonocity.

\paragraph{Changes for non-monotone functions.}
For non-monotone functions, however,
adding even a single element can degrade the set value substantially.
To guard against this, we introduce an extra filtering step
at the end of each threshold sampling round
to return a subset $S' \subseteq S$ of elements such that
each element provides the required marginal value of $\tau$.
We call this modified version with the \emph{post-filtering steps} the
\NMTS algorithm.
Concretely, since we use an approximate maximum threshold
$\hat{t} \approx t^*$, the $\varepsilon$-fraction of elements added to $S$
after the true threshold $t^*$ can have negative marginal values,
which lowers the solution quality and prevents the use of Markov's inequality
in the analysis.\footnote{\change{The post-filtering steps in \NMTS fixes an error in the original ICML 2019 manuscript.}}
}


Before presenting \change{\NMTS in~\Cref{alg:sampling}}, we define the distribution
$\cD_t$ from which \change{this algorithm} samples when estimating the maximum
cardinality~$t^*$. Sampling from this Bernoulli distribution can be simulated
with two calls to the evaluation oracle.
\change{This step corresponds to calling the \textsc{Reduced-Mean}
estimator subroutine in \citet{fahrbach2019submodular}.}

\begin{definition}
\label{def:indicator_distribution}
Conditioned on the current state of the algorithm,
consider the process where \change{we sample $T \sim \mathcal{U}(A, t)$}
and then $x \sim A\setminus T$ \change{uniformly at random}.
\change{For all $t \in \{0,1,\dots,|A|-1\}$,}
let $\mathcal{D}_t$ denote the \change{Bernoulli} distribution over the indicator random variable
\begin{align*}
I_t \DEF \ind\bracks*{\Delta(x,S\cup T) \ge \tau}.
\end{align*}
\change{For completeness, we define $I_{|A|} = 0$.}
\end{definition}

\noindent
\change{It is useful to think of}
$\Exp{}{I_t}$ as the probability that the \change{$(t+1)$-st
marginal gain} is at least threshold~$\tau$ if the candidates in $A$ are inserted
into $S$ according to a \change{uniformly} random permutation.

\begin{algorithm}[H]
  \caption{\change{\NMTS}}
  \label{alg:sampling}
  \textbf{Input:} oracle for $f : 2^N \rightarrow \R_{\ge 0}$, constraint $k$,
    threshold $\tau$,
    \change{candidates scale factor $c \ge 1$,} error~$\varepsilon$, failure probability~$\delta$

  \begin{algorithmic}[1]
    \STATE Set smaller error $\hat{\varepsilon} \leftarrow \varepsilon/3$
    \STATE Set 
        $r \leftarrow \ceil{\log_{(1-\hat\varepsilon)^{-1}}(2n/\delta)}$,
        $m \leftarrow \ceil{\log_{\change{(1+\hat\varepsilon)}}(k)}$
    \STATE Set smaller failure probability
        $\hat{\delta} \leftarrow \delta/(2r(m+1))$
    \STATE Initialize \change{$S' \gets \emptyset$}, $S \leftarrow \emptyset$, $A \leftarrow N$
    \FOR{$r$ rounds}
      \STATE Filter $A \leftarrow \{x \in A : \Delta(x,S) \ge \tau\}$ \label{step:inner_filter}
      \IF{$\abs{A}\ {\change{< ck}}$}
        \STATE \textbf{break}
      \ENDIF
      \FOR{$i=0$ to $m$} 
        \STATE Set $t \leftarrow \min\{\floor{(1 + \hat{\varepsilon})^i}, |A|\}$
        \IF{$\EstimateMean(\mathcal{D}_t, \hat{\varepsilon}, \hat{\delta})$}
            \STATE \textbf{break}
        \ENDIF
      \ENDFOR
      \STATE Sample $T \sim \mathcal{U}(A, \min\set{t, k - \abs{S}})$
      \STATE \change{Shuffle $T$ uniformly at random to get $(x_1, x_2,\dots,x_{|T|})$}
       \hfill \change{// Post-filtering}
      \STATE \change{Update $S' \gets S' \cup \{x_i \in T : \Delta(x_i, S \cup \{x_1, \dots, x_{i-1}\}) \ge \tau\}$} \label{step:post_filter}
      \STATE Update $S \leftarrow S \cup T$
      \IF{$\abs{S} = k$}
        \STATE \textbf{break}
      \ENDIF
    \ENDFOR
    \STATE \textbf{return} $(\change{S',}\ S, A)$
  \end{algorithmic}
\end{algorithm} 

\begin{restatable}{lemma}{NonmonotoneThresholdSamplingAlg}
\label{lem:adaptive_sampling}
\change{Let $Z$ be the event that all calls to \EstimateMean give correct outputs (i.e., the reported property in \Cref{lem:estimator} holds).}
\change{For any nonnegative submodular function $f$,
\NMTS} outputs \change{sets $S', S \subseteq N$ with $|S'| \le |S| \le k$} in
$O(\log(n/\delta)/\varepsilon)$
adaptive rounds such that the following \change{hold conditioned on $Z$}:
  \begin{enumerate}
    \item \change{The algorithm makes} $O(n/\varepsilon)$ oracle queries in expectation.
    \item \change{With probability at least $1 - \delta/2$,
         if $|S| < k$, the number of remaining candidates is $|A| < ck$.}
    \item \change{The filtered set $S'$ has expected size
          $\Exp{}{|S'| \mid Z} \ge (1 - \hat\varepsilon) \Exp{}{|S| \mid Z}$.}
  \end{enumerate}
  Further, event $Z$ happens with probability at least $1 - \delta/2$.
  Finally, the following properties hold \change{unconditionally}:
  \begin{enumerate}
    \setcounter{enumi}{3}
    \item \change{$f(S') \ge \tau \cdot |S'|$}
    \item \change{$\Pr(x \in S') \le 1/c$}
  \end{enumerate}
\end{restatable}

\begin{algorithm}[H]
  \caption{\EstimateMean}
  \label{alg:estimate_mean}
  \textbf{Input:} Bernoulli distribution $\mathcal{D}$,
    error $\varepsilon$, failure probability $\delta$
  \begin{algorithmic}[1]
    \STATE Set number of samples
      $m \leftarrow 16 \ceil{\log(2/\delta)/\varepsilon^2}$
    \STATE Sample $X_1, X_2,\dots, X_m \sim \mathcal{D}$
    \STATE Set $\overline{\mu} \leftarrow \frac{1}{m} \sum_{i=1}^m X_i$
    \IF{$\overline{\mu} \le 1 - 1.5\varepsilon$}
      \STATE \textbf{return} \texttt{true}
    \ENDIF
    \STATE \textbf{return} \texttt{false}
  \end{algorithmic}
\end{algorithm}

\begin{lemma}{\cite{fahrbach2019submodular}}
\label{lem:estimator}
For any Bernoulli distribution $\mathcal{D}$,
\EstimateMean uses $O(\log(\delta^{-1})/\varepsilon^{2})$
samples to \change{report one of the following properties,
which is correct with probability at least $1-\delta$:}
\begin{enumerate}
  \item If the output is \textnormal{\texttt{true}}, then the mean of $\mathcal{D}$ is $\mu \le 1 - \varepsilon$.
  \item If the output is \textnormal{\texttt{false}}, then the mean of $\mathcal{D}$ is $\mu \ge 1 - 2\varepsilon$.
\end{enumerate}
\end{lemma}

\subsection{Unconstrained Submodular Maximization}

The second subroutine in our non-monotone maximization algorithm is a
constant-approximation algorithm for unconstrained submodular maximization
that runs in a constant number of rounds depending on $\varepsilon$.
While the focus of this paper is submodular maximization subject to a
cardinality constraint, 
we show how calling \UnconstrainedMaximization
on a new ground set $A \subseteq N$ of size $O(k)$ can be used with
\citet{buchbinder2014submodular} to achieve a constant-approximation
for the constrained maximization problem.

\begin{lemma}{\cite{feige2011maximizing}}
\label{lem:unconstrained-maximization}
For any nonnegative submodular function $f$,
denote the solution to the unconstrained maximization problem by
$\OPT = \max_{S \subseteq N} f(S)$.
If $R$ is a uniformly random subset of~$S$, then
$\E[f(R)] \ge (1/4)\OPT$.
\end{lemma}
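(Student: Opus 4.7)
The plan is to combine three ingredients: the standard submodular inequality $f(A) + f(B) \ge f(A \cup B) + f(A \cap B)$ applied to a clever pairing of sets, the fact that the uniform distribution on $2^N$ is invariant under symmetric difference with any fixed set, and \Cref{lem:empty-lower-bound} applied to an auxiliary submodular function.

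Let $S^* \subseteq N$ be an optimal set so that $f(S^*) = \OPT$, and let $R$ denote a uniformly random subset of $N$ (so each element appears in $R$ independently with probability $1/2$). First, I would observe that the map $T \mapsto T \triangle S^*$ is a measure-preserving bijection on $2^N$, so $R \triangle S^*$ is also uniform and hence $\E[f(R)] = \E[f(R \triangle S^*)]$.

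Next, I would apply submodularity to the pair $R$ and $R \triangle S^*$. A short set-theoretic calculation gives $R \cup (R \triangle S^*) = R \cup S^*$ and $R \cap (R \triangle S^*) = R \setminus S^*$, so $f(R) + f(R \triangle S^*) \ge f(R \cup S^*) + f(R \setminus S^*) \ge f(R \cup S^*)$, using nonnegativity in the last step. Taking expectations and combining with the symmetry from the previous step yields $2\E[f(R)] \ge \E[f(R \cup S^*)]$.

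Finally, I would lower bound $\E[f(R \cup S^*)]$ by introducing the auxiliary function $g : 2^{N \setminus S^*} \to \R_{\ge 0}$ defined by $g(T) = f(T \cup S^*)$, which is nonnegative and submodular with $g(\emptyset) = f(S^*) = \OPT$. Since each element of $N \setminus S^*$ belongs to $R \cup S^*$ with probability exactly $1/2$, \Cref{lem:empty-lower-bound} applied with $p = 1/2$ gives $\E[f(R \cup S^*)] = \E[g(R \cap (N \setminus S^*))] \ge (1/2)\, g(\emptyset) = \OPT/2$, and plugging in yields $\E[f(R)] \ge \OPT/4$. The main obstacle is spotting the right pairing that makes submodularity useful---pairing $R$ with $R \triangle S^*$ so that the union and intersection collapse to $R \cup S^*$ and $R \setminus S^*$; once that is in hand, the rest is a direct application of \Cref{lem:empty-lower-bound} together with the symmetry of the uniform distribution.
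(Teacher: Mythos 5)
Your proof is correct. Note that the paper itself gives no argument for this lemma---it is quoted from \cite{feige2011maximizing}---so the relevant comparison is with the standard proof there, which couples the random set $R$ with the partition $(S^*, N\setminus S^*)$ and proves the stronger bound $\E[f(R)] \ge \tfrac14\bigl(f(\emptyset)+f(S^*)+f(N\setminus S^*)+f(N)\bigr)$ via a lemma on independently sampled unions. Your route is genuinely different and arguably more in the spirit of this paper: you use the measure-preserving involution $T \mapsto T \triangle S^*$ to get $\E[f(R)] = \E[f(R\triangle S^*)]$, the lattice form of submodularity on the pair $(R, R\triangle S^*)$ (whose union and intersection are indeed $R\cup S^*$ and $R\setminus S^*$) together with nonnegativity to get $2\E[f(R)] \ge \E[f(R\cup S^*)]$, and then \Cref{lem:empty-lower-bound} applied to $g(T)=f(T\cup S^*)$ with $p=1/2$ to get $\E[f(R\cup S^*)] \ge \OPT/2$---exactly the trick the paper itself uses in \Cref{lem:approx-lemma-2}. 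Every step checks out (you could even skip restricting the ground set to $N\setminus S^*$ and apply \Cref{lem:empty-lower-bound} to $g$ with the random set $R$ directly). What you give up relative to Feige et al.\ is only the sharper four-term bound, which is not needed here; you also implicitly fix the typo in the statement, reading ``uniformly random subset of $N$'' (equivalently, of the ground subset $A$ in \UnconstrainedMaximization), which is the intended meaning.
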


The guarantees for the \UnconstrainedMaximization algorithm
in \Cref{lem:unconstrained-maximization-alg}
are standard consequences of \Cref{lem:unconstrained-maximization}.

\begin{algorithm}
  \caption{\UnconstrainedMaximization}
  \label{alg:unconstrained-maximization}
  \textbf{Input:} oracle for $f : 2^N \rightarrow \R_{\ge 0}$,
  ground subset ${A \subseteq N}$,
  error $\varepsilon$, failure probability $\delta$
  \begin{algorithmic}[1]
    \STATE Set iteration bound $t \leftarrow \ceil{\log(1/\delta)/\log(1+(4/3)\varepsilon)}$
    \FOR{$i=1$ to $t$ in parallel}
      \STATE Let $R_i$ be a uniformly random subset of $A$
    \ENDFOR
    \STATE Set $S \leftarrow \argmax_{X \in \{R_1,\dots,R_t\}} f(X)$
    \STATE \textbf{return} $S$
  \end{algorithmic}
\end{algorithm} 

\begin{restatable}[]{lemma}{unconstrainedMaximizationAlg}
\label{lem:unconstrained-maximization-alg}
For any nonnegative submodular function~$f$ and
subset $A \subseteq N$,
\change{\Cref{alg:unconstrained-maximization}} outputs a set $S \subseteq A$
in one adaptive round
using $O(\log(1/\delta)/\varepsilon)$ oracle queries
such that with probability at least $1-\delta$ we have
$f(S) \ge (1/4 - \varepsilon)\OPT_A$, where
$\OPT_A = \max_{T \subseteq A} f(T)$.
\end{restatable}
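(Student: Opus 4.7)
The plan is to boost the single-sample guarantee of \Cref{lem:unconstrained-maximization} by independent repetition. Each iteration of \UnconstrainedMaximization draws an independent uniformly random subset $R_i \subseteq A$ and evaluates $f(R_i)$, so all $t$ queries can be issued simultaneously, giving one adaptive round and exactly $t = O(\log(1/\delta)/\varepsilon)$ queries once we verify the claimed value of $t$ suffices (using $\log(1+(4/3)\varepsilon) = \Theta(\varepsilon)$). Correctness of the set membership $S \subseteq A$ is immediate.

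The approximation guarantee is where the work happens. Applying \Cref{lem:unconstrained-maximization} to the submodular function $f$ restricted to the ground set $A$ shows that $\Exp{}{f(R_i)} \ge (1/4)\OPT_A$ for each $i$. Directly applying Markov to $f(R_i)$ is not useful because the event of interest is that $f(R_i)$ is \emph{large}, not small. The standard fix is to apply Markov to the nonnegative loss $Y_i \DEF \OPT_A - f(R_i) \ge 0$, whose expectation is at most $(3/4)\OPT_A$. Then
\[
\Prob{}{f(R_i) < (1/4 - \varepsilon)\OPT_A}
= \Prob{}{Y_i > (3/4 + \varepsilon)\OPT_A}
\le \frac{3/4}{3/4 + \varepsilon}
= \frac{1}{1 + (4/3)\varepsilon}.
\]

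Since the $R_i$ are drawn independently, the probability that every iteration fails to produce a $(1/4-\varepsilon)$-approximation is at most $\bigl(1 + (4/3)\varepsilon\bigr)^{-t}$. Plugging in $t = \ceil{\log(1/\delta)/\log(1+(4/3)\varepsilon)}$ from the algorithm, this quantity is at most $\delta$, and the returned $S = \argmax_i f(R_i)$ therefore satisfies $f(S) \ge (1/4-\varepsilon)\OPT_A$ with probability at least $1-\delta$. Finally, since $\log(1 + (4/3)\varepsilon) \ge (2/3)\varepsilon$ for small $\varepsilon$, we obtain the stated bound $t = O(\log(1/\delta)/\varepsilon)$ on both queries and (parallel) running time.

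The proof has no real obstacle; the only subtlety worth flagging is the choice to apply Markov's inequality to the loss $\OPT_A - f(R_i)$ rather than to $f(R_i)$ itself, which is what gives the logarithmic-in-$1/\delta$ dependence instead of a polynomial one, and the minor calculation tying $\log(1+(4/3)\varepsilon)$ back to a clean $\Theta(\varepsilon)$ term in the final query bound.
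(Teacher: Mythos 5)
Your proposal is correct and matches the paper's argument: the paper's ``analog of Markov's inequality'' (bounding $\Exp{}{f(R_i)} \le p(1/4-\varepsilon)\OPT_A + (1-p)\OPT_A$ using $f(R_i) \le \OPT_A$) is exactly the same computation as your application of Markov to the loss $\OPT_A - f(R_i)$, yielding the identical per-sample failure bound $3/(3+4\varepsilon)$, the same independent boosting, and the same $\log(1+(4/3)\varepsilon) \ge 2\varepsilon/3$ step. The only cosmetic difference is that the paper explicitly flags the (trivial) case $\OPT_A = 0$ before dividing by $\OPT_A$, which you should also note since your Markov step requires $\OPT_A > 0$.
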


An essentially optimal algorithm for unconstrained submodular
maximization was recently given in \citet{chen2018unconstrained},
which allows us to 
slightly improve the approximation
factor of our non-monotone maximization algorithm.

\begin{theorem}{\cite{chen2018unconstrained}}
\label{thm:new-unconstrained}
There is an algorithm that
achieves a $(1/2 - \varepsilon)$-approximation for unconstrained
submodular maximization
using $O(\log(1/\varepsilon) / \varepsilon)$ adaptive rounds
and $O(n \log^{3}(1/\varepsilon) / \varepsilon^4)$ evaluation oracle queries.
\end{theorem}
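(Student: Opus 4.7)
The plan is to parallelize the sequential double-greedy algorithm of Buchbinder, Feldman, Naor, and Schwartz, which achieves an optimal $1/2$-approximation for unconstrained submodular maximization but uses $\Theta(n)$ adaptive rounds because it processes elements one at a time. In the sequential version, the algorithm maintains two sets $X_i \subseteq Y_i$ (initialized as $\emptyset$ and $N$), and at step $i$ it computes the two marginals $a_i = \Delta(x_i, X_{i-1})$ and $b_i = -\Delta(x_i, Y_{i-1} \setminus \{x_i\})$; it then adds $x_i$ to $X$ with probability $a_i^+ / (a_i^+ + b_i^+)$ and otherwise removes it from $Y$. The natural barrier to parallelization is that each decision is computed with respect to the current state of $X$ and $Y$, which evolves with every step.

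My approach would be to process the ground set in batches of geometrically shrinking size and, within each batch, to show that simultaneous independent decisions based on the current $(X, Y)$ incur only a controllable loss. Concretely, I would fix a phase length related to $\varepsilon$, pick a random subset $B \subseteq Y \setminus X$, and for each $x \in B$ estimate $a_x, b_x$ with respect to the current state in parallel; then I would commit all decisions at once. To bound the loss, I would couple the parallel execution to an idealized sequential execution along a uniformly random permutation of $B$ and use submodularity to show that the expected discrepancy between parallel and sequential marginals is at most $\varepsilon$ times the combined potential $f(X) + f(Y)$. Summing these losses across $O(\log(1/\varepsilon)/\varepsilon)$ phases, and following the standard BFNS potential argument that each phase preserves $f(X_i) + f(Y_i) \ge f(X_{i-1}) + f(Y_{i-1}) + (a_i^+ + b_i^+ - \Delta_{\mathrm{OPT}})$, one obtains $\Exp{}{f(\mathrm{ALG})} \ge (1/2 - \varepsilon)\,\OPT$.

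For the query complexity, the two-call trick used inside \AdaptiveSampling lets each marginal $a_x$ or $b_x$ be accessed via a Bernoulli-type indicator rather than a numerical estimate, so $O(\log(1/\varepsilon)/\varepsilon^2)$ Chernoff-style samples per element per phase suffice to drive the failure probability below $\poly(\varepsilon)$. Multiplying the per-element cost by $n$ elements and $O(\log(1/\varepsilon)/\varepsilon)$ phases yields the claimed $O(n \log^3(1/\varepsilon)/\varepsilon^4)$ total evaluations, while each phase executes in $O(1)$ adaptive rounds for a total of $O(\log(1/\varepsilon)/\varepsilon)$ rounds.

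The main obstacle is the coupling argument that bounds the discrepancy between parallel and sequential double-greedy decisions within a batch. Unlike monotone settings, where uniform random sampling suffices, the non-monotone analysis must simultaneously control gains from adding to $X$ and losses from removing from $Y$, and a naive batch-commit step can cascade errors through the potential $f(X) + f(Y)$. Handling this cleanly will likely require a martingale-style analysis of the joint $(X, Y)$ process, carefully choosing the batch-size schedule so that the accumulated $\varepsilon$-loss matches the $\Delta_{\mathrm{OPT}}$ term in the BFNS invariant, and then a union bound over phases to promote the in-expectation guarantee to the claimed high-probability statement.
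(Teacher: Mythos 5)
First, note that the paper does not prove this statement at all: it is quoted verbatim from \cite{chen2018unconstrained}, so the only fair comparison is against that work, which (like your plan) parallelizes the double-greedy algorithm of Buchbinder et al. Your outline points in the right direction, but it has a genuine gap at exactly the step you yourself flag as ``the main obstacle.'' Committing a whole batch of independent decisions based on a frozen state $(X,Y)$, with a batch size chosen in advance as a function of $\varepsilon$ (or ``geometrically shrinking''), does not give you control of the discrepancy: nothing prevents the first few committed decisions from drastically changing the marginals $a_x, b_x$ of the rest of the batch, and no coupling to a random sequential order can fix this if the batch length is fixed \emph{a priori}. The cited algorithm resolves this by choosing the batch adaptively: in each round it takes a random permutation of the undecided elements and, by testing logarithmically many candidate prefix lengths in parallel with sampled estimates (very much in the spirit of \AdaptiveSampling), selects the longest prefix whose aggregate estimated ``damage'' to the marginals stays below a threshold, and only that prefix is committed. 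Without this adaptive prefix selection your per-phase loss bound of ``$\varepsilon$ times $f(X)+f(Y)$'' is an assertion, not a consequence of submodularity.

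The second gap is the round count. If each phase handles a batch whose size depends only on $\varepsilon$, you need $\Omega(n/\poly(1/\varepsilon))$ phases to decide all $n$ elements, and even a constant-fraction-per-phase guarantee would give $O(\log n)$ rounds, not the claimed $O(\log(1/\varepsilon)/\varepsilon)$, which is independent of $n$. The missing idea is a potential argument: one tracks an aggregate quantity (roughly, the total of the relevant marginals over undecided elements) that shrinks multiplicatively in each round, and shows that once it falls below an $O(\varepsilon)\cdot\OPT$ level, \emph{all} remaining elements can be decided simultaneously in one round with negligible loss to the BFNS invariant; that is what caps the number of phases at $O(\log(1/\varepsilon)/\varepsilon)$. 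Finally, a smaller accounting issue: $a_x$ and $b_x$ are exact after two oracle calls each, so per-element Chernoff estimation is not where the sampling cost arises (it arises in estimating the aggregate prefix statistics), and your tally $n\cdot O(\log(1/\varepsilon)/\varepsilon^2)\cdot O(\log(1/\varepsilon)/\varepsilon)$ does not reproduce the stated $O(n\log^3(1/\varepsilon)/\varepsilon^4)$ bound; also the guarantee needed here is in expectation, so no high-probability amplification is required. As written, the proposal is a plausible research plan whose central lemma (the stability/coupling bound and the phase-count potential) is left unproved.
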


\section{Non-monotone Submodular Maximization}
\label{sec:nonmonotone}

In this section we show how to combine \change{\NMTS} and
\UnconstrainedMaximization to give the first constant-factor approximation
algorithm for non-monotone submodular maximization subject to a cardinality
constraint~$k$ that uses $O(\log(n))$ adaptive rounds.
Moreover, this algorithm makes $O(n \log(k))$ expected oracle queries.
While the approximation factor is only $0.039$,
we demonstrate that \change{threshold sampling} can readily be extended to
non-monotone settings without increasing adaptivity.

We start by describing \NonmonotoneMaximization
and the analysis of its approximation factor at a high level.
One inspiration for this algorithm is \Cref{lem:empty-lower-bound},
which allows us to lower bound the expected value of the returned set
$\Exp{}{f(R)}$ by $\OPT$ as long as
every element has at most a constant probability less than 1
of being in the output.
With this property in mind, \NonmonotoneMaximization
starts by trying different thresholds in parallel, one of which
is close to $c_1 \OPT/k$.
For each threshold, it runs \change{\NMTS and breaks}
if the number of candidates in $A$ falls below~$c_3 k$.
If $c_3 > 1$, this guarantees each element appears \change{in~$S'$} with
probability at most $1/c_3$.
In the event that \change{\NMTS} breaks because $|A| < c_3 k$,
it then runs unconstrained submodular maximization on $A$
and downsamples the solution \change{to have} cardinality at most~$k$.
In the end, the algorithm returns the set with the maximum value over all thresholds.
Our analysis shows how we optimize the constants $c_1$ and $c_3$
to balance the expected trade-offs between the two events
and give the best approximation factor.

\begin{algorithm}
  \caption{\NonmonotoneMaximization}
  \label{alg:nonmonotone}
  \textbf{Input:} evaluation oracle for $f : 2^N \rightarrow \R$, constraint $k$, error $\varepsilon$, failure probability $\delta$
  \begin{algorithmic}[1]
    \STATE Set smaller error $\hat{\varepsilon} \leftarrow \varepsilon/6$ \label{alg-line:set-smaller-epsilon}
    \STATE Set $\Delta^* \leftarrow \max\{f(x) : x \in N\}$,
    $r \leftarrow \ceil{2\log(k)/\hat\varepsilon}$
    \STATE Set smaller failure probability $\hat{\delta} \leftarrow \delta/(2(r+1))$
    \STATE Set optimized constants
        $c_1 \leftarrow 1/7, c_3 \leftarrow 3$
    \STATE Initialize $R \leftarrow \emptyset$
    \FOR{$i=0$ to $r$ in parallel}
      \STATE Set $\tau \leftarrow c_1 (1+\hat\varepsilon)^i \Delta^*/k$ \label{alg-line:thresold-setting}
      \STATE \change{Set $(S', S, A) \leftarrow \textsc{Non-Monotone-}\textsc{Threshold-Sampling}(f, k, \tau, c_3, \hat\varepsilon, \hat{\delta})$} \label{alg-line:call-non-monotone-threshold-sampling}
      \STATE Initialize $U \leftarrow \emptyset, U' \leftarrow \emptyset$,
             $U'' \leftarrow \emptyset$
      \IF{$|A| < c_3 k$}
        \STATE Set $U \leftarrow \UnconstrainedMaximization(f, A, \hat\varepsilon, \hat\delta)$
        \IF{$|U| > k$}
          \STATE Sample $D \sim \mathcal{U}(U, k)$
          \STATE Update $U' \leftarrow D$
        \ELSE
          \STATE Update $U' \leftarrow U$
        \ENDIF
      \ENDIF
      \STATE Permute the elements of $U'$ uniformly at random
      \STATE Set $U'' \leftarrow$ highest-valued prefix of the permutation
      \STATE Update $R \leftarrow \argmax_{X \in \{R, \change{S',} U''\}} f(X)$
    \ENDFOR
    \STATE \textbf{return} $R$
  \end{algorithmic}
\end{algorithm}

\begin{restatable}[]{theorem}{nonmonotoneMaximization}
\label{thm:nonmonotone-maximization}
For any nonnegative submodular function $f$,
\NonmonotoneMaximization outputs a set $R \subseteq N$ with
$|R| \le k$ in $O(\log(n/\delta)/\varepsilon)$ adaptive rounds
\change{and with $O(n \log(k)/\varepsilon^2 + \delta n^2)$ oracle queries in expectation such that}
$\E[f(S)] \ge \change{0.026(1-\varepsilon)(1-\delta)}\OPT$.
\change{Setting $\delta < 1/n$ yields a good trade-off between the number of adaptive
rounds and oracle calls.}
\end{restatable}

Since the quality of our approximation relies on the approximation factor
of a low-adaptivity algorithm for unconstrained submodular maximization,
we can use \change{\citet{chen2018unconstrained}} instead of 
\UnconstrainedMaximization to 
improve our approximation without a loss in adaptivity or query
complexity.


\begin{restatable}[]{theorem}{NewNonmonotoneMaximization}
\label{thm:new-nonmonotone-maximization}
\change{For any nonnegative submodular function $f$,
there is an algorithm that outputs a set $R \subseteq N$ with
$|R| \le k$ in $O(\log(n/\delta)/\varepsilon)$ adaptive rounds
and with $O(n \log(k)\log^3(1/\varepsilon)/\varepsilon^5 + \delta n^2)$ expected oracle queries such that
$\E[f(S)] \ge \change{0.039 (1-\varepsilon)(1-\delta)} \OPT$.
Setting $\delta < 1/n$ yields a good trade-off between the number of adaptive
rounds and oracle calls.}
\end{restatable}

\subsection{Prerequisite Notation and Lemmas}
We start by defining notation that is useful for analyzing
\change{\NMTS}.
Let $T_1, T_2, \dots, T_r$ be the sequences of randomly generated sets
used to build the output set $S$.
Similarly, let the corresponding sequences of partial solutions be
$S_i = \bigcup_{j=1}^i T_j$ and candidate sets be
$A_0, A_1, \dots, A_r$.
To analyze the approximation factor of \NonmonotoneMaximization,
we consider a threshold $\tau$ sufficiently close to
$\tau^* = \OPT/k$ and then analyze the resulting sets \change{$S'$}, $U$, $U'$, and $U''$.
Lastly, we use $\ALG$ as an alias for the final output set $R$.

Next, we present several lemmas that are helpful for analyzing the
approximation factor.
\morteza{The following lemma allows us to show that 
the quality of a solution of size greater than $k$
degrades at worst by its downsampling rate.
This property is useful for analyzing Line~13 of the
\NonmonotoneMaximization algorithm.}

\begin{restatable}[]{lemma}{downsample}
\label{lem:downsample}
For any subset $S \subseteq N$ and $0 \le k \le |S|$,
if $T \sim \mathcal{U}(S, k)$,
then $\E[f(T)] \ge \morteza{\frac{k}{|S|}} \cdot f(S)$.
\end{restatable}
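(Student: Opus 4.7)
The plan is to reduce the claim to Lemma~\ref{lem:empty-lower-bound} (Buchbinder et al.) by passing to a \emph{complementary} submodular function on the universe $S$. Since sampling a uniform size-$k$ subset $T \subseteq S$ is equivalent to sampling $A = S \setminus T$ uniformly among size-$(|S|-k)$ subsets of $S$, any marginal-probability bound on $A$ translates directly into a bound on $T$ via the identity $f(T) = f(S \setminus A)$.

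First I would define the auxiliary function $h : 2^S \to \R_{\ge 0}$ by
\[
    h(A) \DEF f(S \setminus A).
\]
This is nonnegative since $f$ is, and a short calculation shows $h$ is submodular: for any $A, B \subseteq S$, using $S \setminus (A \cup B) = (S \setminus A) \cap (S \setminus B)$ and $S \setminus (A \cap B) = (S \setminus A) \cup (S \setminus B)$, the submodular inequality for $f$ applied to the pair $(S \setminus A, S \setminus B)$ yields exactly $h(A) + h(B) \ge h(A \cup B) + h(A \cap B)$. Note also that $h(\emptyset) = f(S)$.

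Next I would analyze the sampling. If $T \sim \mathcal{U}(S, k)$, then $A \DEF S \setminus T$ is uniform over the size-$(|S|-k)$ subsets of $S$, so each element of $S$ lies in $A$ with probability exactly $p = 1 - k/|S|$. Applying Lemma~\ref{lem:empty-lower-bound} to the nonnegative submodular function $h$ on the ground set $S$ with this $p$ gives
\[
    \Exp{}{h(A)} \;\ge\; (1-p)\, h(\emptyset) \;=\; \frac{k}{|S|}\, f(S).
\]
Since $f(T) = h(A)$ pointwise, $\Exp{}{f(T)} = \Exp{}{h(A)} \ge (k/|S|)\, f(S)$, which is the claim.

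There is no real obstacle here; the only subtlety is verifying that the complemented function $h$ is submodular, which is a standard consequence of De Morgan's laws and the submodularity of $f$. Once that observation is in place, the result is a one-line application of Lemma~\ref{lem:empty-lower-bound} with $p = 1 - k/|S|$, and the edge cases $k = 0$ and $k = |S|$ hold trivially (both sides equal $0$ or $f(S)$, respectively).
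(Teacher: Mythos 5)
Your proof is correct, but it takes a genuinely different route from the paper. The paper proves the lemma directly: it fixes an ordering $x_1,\dots,x_s$ of $S$, expands $f(R)$ for each size-$k$ subset $R$ as a telescoping sum of marginals $\Delta(x_i,\{x_1,\dots,x_{i-1}\}\cap R)$, lower bounds each by $\Delta(x_i,\{x_1,\dots,x_{i-1}\})$ via diminishing returns, and then uses the counting identity that each element lies in $\binom{s-1}{k-1}$ of the $\binom{s}{k}$ subsets to recover the factor $k/s$ times the telescoped value of $f(S)$. You instead reduce to \Cref{lem:empty-lower-bound} by passing to the complemented function $h(A)=f(S\setminus A)$, checking that $h$ is nonnegative submodular with $h(\emptyset)=f(S)$, and observing that $S\setminus T$ contains each element of $S$ with probability exactly $p=1-k/|S|$, so the sampling lemma immediately yields $\E[f(T)]=\E[h(S\setminus T)]\ge(1-p)f(S)$. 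Your reduction is shorter and reuses machinery already present in the paper (and, as a side benefit, is transparently valid even when $f(\emptyset)>0$, a point the paper's final equality glosses over); the paper's argument is self-contained and elementary, needing only submodularity and a binomial count rather than the Buchbinder et al.\ lemma. One trivial nit: in your $k=0$ edge case the left-hand side is $f(\emptyset)$, not necessarily $0$, but the inequality still holds since the right-hand side is $0$.
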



\change{We defer the proof of \Cref{lem:downsample} to~\Cref{app:nonmonotone}.}

\subsection{Analysis of the Approximation Factor}
\label{subsec:analysis}

The main idea behind our analysis is to capture two different behaviors 
of \NonmonotoneMaximization
and balance the worst of the two outcomes by optimizing constants.

\begin{definition}
\change{For a given threshold,}
let $A_{< c_3 k}$ denote the event that
\change{\NMTS} breaks because $|A| < c_3 k$.
Similarly, let $A_{\ge c_3 k}$ denote the complementary event.
\end{definition}

The following two key lemmas lower bound the expected solution in terms of
$\OPT$ and $\Prob{}{A_{< c_3 k} \mid Z}$,
\change{where $Z$ is the event that all \textsc{Reduced-Mean} estimates (defined in \Cref{lem:adaptive_sampling}),
for all candidate thresholds and their internal rounds, output correct properties.}
The goal is to average these inequalities
so that the conditional probability terms disappear,
giving us with a lower bound \change{that is} only in terms of $\OPT$.

\change{
\begin{lemma}\label{lem:approx-lemma-1}
For any $\tau$ such that $\tau \le c_1 \tau^* \le \tau(1+\hat{\varepsilon})$,
we have
\[
    \Exp{}{\ALG}
    \ge
    \Exp{}{f(S') \mid Z} \Prob{}{Z}
    \ge
    (1 - \varepsilon) (1 - \morteza{\delta}) c_1 \OPT \cdot \Prob{}{A_{\ge c_3 k} \mid Z}.
\]
\morteza{Here $\varepsilon$ and $\delta$ refer to the input parameters of Algorithm~\ref{alg:nonmonotone}, and $\hat{\varepsilon}$ is set in Line~\ref{alg-line:set-smaller-epsilon} of that algorithm.}
\end{lemma}

\begin{proof}
\morteza{
The way we set the threshold value $\tau$ in Line~\ref{alg-line:thresold-setting} ensures that
there exists a value of $\tau$ that satisfies the requirements of the lemma statement, i.e., $\tau \approx c_1 \tau^*$. Throughout the rest of the proof, when we talk about sets $S$ and $S'$, we refer to the output of \textsc{Non-Monotone-}\textsc{Threshold-Sampling} in Line~\ref{alg-line:call-non-monotone-threshold-sampling} for this particular value of $\tau$.}
First observe that by \Cref{lem:adaptive_sampling} Property 4, we have
\begin{align*}
    \Exp{}{\ALG}
    &\ge
    \Exp{}{f(S')} \\
    &\ge
    \Exp{}{f(S') \mid Z} \Prob{}{Z} \\
    &\ge
    \tau \cdot \Exp{}{|S'| \mid Z} \Prob{}{Z}.
\end{align*}
Using Properties~3 and~2 of~\Cref{lem:adaptive_sampling}
and applying the law of total expectation
since \morteza{both $|S|$ and $|S'|$ are nonnegative random variables},
\begin{align*}
    \Exp{}{|S'| \mid Z} \Prob{}{Z}
    &\ge
    (1 - \hat\varepsilon) \cdot \Exp{}{|S| \mid Z} \Prob{}{Z} \\
    &\ge
    (1 - \hat\varepsilon) \cdot \Exp{}{|S| \mid Z, A_{\ge c_3 k}} \Prob{}{Z, A_{\ge c_3 k}} \\
    &\ge
    (1 - \hat\varepsilon) k (1-\hat\delta/2) \Prob{}{Z, A_{\ge c_3 k}},
\end{align*}
since \morteza{conditioned on $Z$}, with probability at least $1 - \hat\delta/2$, we have
$|S| = k$ if $|A| \ge c_3 k$.
Rearranging the conditional probabilities, it follows that
\begin{align*}
    (1 - \hat\varepsilon) k (1-\hat\delta/2) \Prob{}{Z, A_{\ge c_3 k}}
    &=
    (1 - \hat\varepsilon) k (1-\hat\delta/2) \Prob{}{A_{\ge c_3 k} \mid Z} \Prob{}{Z} \\
    &\ge
    (1 - \hat\varepsilon) k (1-\hat\delta/2) \Prob{}{A_{\ge c_3 k} \mid Z} (1 - \morteza{\hat\delta}/2) \\
    &\ge
    (1 - \hat\varepsilon) k (1-\morteza{\hat\delta}) \Prob{}{A_{\ge c_3 k} \mid Z}.
\end{align*}
Putting everything together including our choice of $\tau$ gives us
\begin{align*}
    \Exp{}{\ALG}
    &\ge
    \tau \cdot \Exp{}{|S'| \mid Z} \Prob{}{Z} \\
    &\ge
    (1 - \hat\varepsilon) (1-\morteza{\hat\delta}) k \tau \Prob{}{A_{\ge c_3 k} \mid Z} \\
    &\ge
    (1 - \varepsilon) (1 - \delta) c_1 \OPT \cdot \Prob{}{A_{\ge c_3 k} \mid Z}. \qedhere
\end{align*}
\end{proof}
}

The core of our analysis proves the following lower bound,
which intricately uses the conditional expectation of nonnegative random variables.

\begin{lemma}\label{lem:approx-lemma-2}
Let $\UnconstrainedApprox$ denote the approximation factor for an unconstrained
submodular maximization algorithm.
For any threshold $\tau$ such that
$\tau \le c_1 \tau^* \le \tau(1+\hat\varepsilon)$, we have
\[
  \Exp{}{\ALG} \ge
  \frac{(1 - \varepsilon) (1-\change{\delta}) \UnconstrainedApprox}{c_3}
    \bracks*{(1-c_1) \OPT \cdot \Prob{}{A_{< c_3 k} \change{\mid Z}} - \frac{\OPT}{c_3}
    - 2\Exp{}{f(\change{S'} \mid \change{Z})}}.
\]
\end{lemma}

\begin{proof} 
\morteza{Similar to \Cref{lem:approx-lemma-1}, we know that there exists a value of $\tau$ in the algorithm such that $\tau \le c_1 \tau^* \le \tau(1+\hat\varepsilon)$.
We focus on sets $S', S, A$ returned for this particular value of $\tau$.}
For any \change{triple} of subsets $\change{S'}, S, A \subseteq N$ returned by \change{$\NMTS$},
we can partition the optimal \change{set} $S^*$ into
$S_1^* = S^* \cap A$ and
$S_2^* = S^* \setminus A$.
Let $U_A$ be the output of \change{the} $\UnconstrainedMaximization$
\change{call on Line~11.}
By \Cref{lem:unconstrained-maximization-alg} \change{and a union bound}, we have
$f(U_A) \ge (\alpha - \hat\varepsilon) f(S_1^*)$
\change{with probability at least $1 - \delta/2$.}
Submodularity and the definition of $A$ imply that
$f(S_2^* \cup \change{S'}) \le f(\change{S'}) + k\tau$.
Let $\Gap(A,\change{S'}) = \max\{f(S_2^*) - f(S_2^* \cup \change{S'}), 0\}$.
By subadditivity and the previous inequalities,
\change{we have}
\begin{align}\label{eqn:sketch-1}
  f(S^*)
  &\le
  \change{f(S_1^*) + f(S_2^*)} \notag\\
  &=
  f(S_1^*) + f(S_2^*) - f(S_2^* \cup \change{S'}) + f(S_2^* \cup \change{S'}) \notag\\
  &\le
  (\alpha - \hat\varepsilon)^{-1} f(U_A) + \Gap(A, \change{S'}) + f(\change{S'}) + k\tau.
\end{align}
Using \Cref{eqn:sketch-1} and the assumption on $\tau$, \change{it follows that}
  \begin{align}\label{eqn:sketch-2}
  f(U_A)
  \ge
  (\alpha - \hat\varepsilon)
  \bracks*{(1 - c_1) \OPT - \Gap(A,\change{S'}) - f(\change{S'})}.
\end{align}

Next, our goal is to upper bound $\Gap(A,\change{S'})$ as a function of $S^*$
\change{to get} a bound that is independent of $A$.
Specifically, we prove in \change{\Cref{app:nonmonotone} that for any sets} $\change{S'},A \subseteq N$,
\change{we have}
\[
    \change{f(S_2^*) - f(S_2^* \cup \change{S'}) \le f(S^*) - f(S^* \cup \change{S'}).}
\]
This is a consequence of submodularity,
\change{and it implies $\Gap(A,S') \le \Gap(\emptyset, S')$.}
\change{Finally}, we have \change{the inequality}
\begin{align*}
    \Gap(\emptyset, \change{S'})
    &=
    \max\{f(S^*) - f(S^* \cup S'), 0\} \\
    &\le
    f(S^*) - f(S^* \cup \change{S'}) + f(\change{S'}).
\end{align*}
\change{Observe that if $f(S^*) - f(S^* \cup S') < 0$
then $\Gap(\emptyset, \change{S'}) = 0$,
but we have $f(S^*) - f(S^* \cup \change{S'}) + f(\change{S'}) \ge 0$
by subadditivity
since $f$ is nonnegative.}

Next, define a new submodular function $g : 2^N \rightarrow \R_{\ge 0}$
such that $g(S) = f(S^* \cup S)$.
Consider the set $\change{S'}$ that is returned by \change{\NMTS.
Each element appears in $\change{S'}$ with probability at most $1/c_3$ by
\Cref{lem:adaptive_sampling}.
Applying \Cref{lem:empty-lower-bound} to $g$ then gives us
$\Exp{}{f(S'\cup S^*) \mid Z} \ge (1-1/c_3) f(S^*)$.}
\morteza{We note that even when conditioning on $Z$, the probability of each element being in $S'$ is upper bounded by $1/c_3$ as elaborated in the proof of Property 5 of Lemma~\ref{lem:adaptive_sampling}. This makes \Cref{lem:empty-lower-bound} applicable.}
It follows that
\begin{align}\label{eqn:sketch-3}
  \Exp{}{\Gap(\emptyset,\change{S'}) \mid \change{Z}}
  &\le
  \Exp{}{f(S^*) - f(S^* \cup \change{S'}) + f(\change{S'}) \mid \change{Z}} \notag\\
  &\le
  (1/c_3) \OPT + \Exp{}{f(\change{S'}) \mid \change{Z}}.
\end{align}

\change{
Now we are prepared to give the lower bound for $\Exp{}{\ALG}$.
It follows from \Cref{lem:downsample} that
\begin{align}
\label{eqn:alg_lower_bound_case_2}
  \Exp{}{\ALG}
  \ge
  \Exp{}{f(U'')}
  \ge
  \Exp{}{f(U')}
  \ge
  \frac{1}{c_3}
  \Exp{}{f(U)}.
\end{align}
Since $f$ is nonnegative, the law of total expectation gives
\[
  \Exp{}{f(U)}
  \ge 
  \ExpCond{}{f(U)}{Z, A_{<c_3 k}} \cdot \Prob{}{Z, A_{<c_3 k}}.
\]
If events $Z$ and $A_{<c_3 k}$ both occur,
\UnconstrainedMaximization is called and we can
lower bound
$\ExpCond{}{f(U)}{Z, A_{< c_3 k}}$ using inequalities~\Cref{eqn:sketch-2} and~\Cref{eqn:sketch-3}.
Concretely, we have
\begin{align*}
  \ExpCond{}{f(U)}{Z, A_{< c_3 k}} \Prob{}{Z, A_{<c_3 k}}
  &\ge
   (1-\delta/2)(\alpha - \hat\varepsilon) \bracks*{(1-c_1)\OPT - \ExpCond{}{\Gap(A,S') + f(S')}{Z, A_{< c_3 k}}} \Prob{}{Z, A_{<c_3 k}}
  \\  
  &\ge
  (1-\delta/2)(\alpha - \hat\varepsilon) \bracks*{(1-c_1)\OPT - \ExpCond{}{\Gap(\emptyset,S') + f(S')}{Z, A_{< c_3 k}}} \Prob{}{Z, A_{<c_3 k}} 
  \\  
  &\ge
  (1-\delta/2)(\alpha - \hat\varepsilon) \bracks*{(1-c_1)\OPT  \cdot \Prob{}{Z, A_{<c_3 k}}  - \Exp{}{\Gap(\emptyset,S') + f(S') \mid Z} \Prob{}{Z}}
 \\
  &\ge
  (1-\delta/2)(\alpha - \hat\varepsilon) \bracks*{(1-c_1)\OPT  \cdot \Prob{}{A_{<c_3 k} \mid Z}  - \frac{\OPT}{c_3} - 2\Exp{}{f(S') \mid Z}{}} \Prob{}{Z}
  \\
  &\ge
  (1-\delta)(1-\varepsilon)
  \alpha
  \bracks*{(1-c_1) \OPT \cdot \Prob{}{A_{<c_3 k} \mid Z} - \frac{\OPT}{c_3} - 2\Exp{}{f(S') \mid Z}}.
\end{align*}
\morteza{We first note that the $1-\delta/2$ factor that appears in the first inequality accounts for the success probability of the call to \UnconstrainedMaximization.}
The second inequality holds because we showed that $\Gap(A,S) \leq \Gap(\emptyset, S)$ for any $A$.
The third inequality follows from
the law of total expectation since $Z$ is a superset of the event $Z$ and $A_{< c_3, k}$,
and observing that~$f$ and $\Gap(A,S)$ are nonnegative.
The fourth inequality applies~Equation~\eqref{eqn:sketch-3}.
The last inequality holds because $\hat{\varepsilon} = \varepsilon/6$ and
$\alpha \ge 1/4$ by \Cref{lem:unconstrained-maximization-alg},
\morteza{and $\Prob{}{Z} \ge 1 - \delta/2$ by Lemma~\ref{lem:adaptive_sampling}}.
Combining this with the lower bound for $\Exp{}{\ALG}$ in
Equation~\Cref{eqn:alg_lower_bound_case_2} completes the proof.
}
\end{proof}

Equipped with these two complementary lower bounds, we now prove our main
results.

\begin{proof}[Proof of \Cref{thm:nonmonotone-maximization}]
Since \NonmonotoneMaximization tries a $\tau$
such that $\tau \le c_1 \tau^* \le \tau(1+\hat\varepsilon)$,
\change{let the analysis follow for this particular threshold.}

We start with the proof of the approximation factor.
Suppose $\Exp{}{f(\change{S'}) \mid \change{Z}} > c_4 \OPT$ for a constant
$c_4 \ge 0$ that we later optimize.
This leads to a $(1-\delta)c_4$-approximation for $\OPT$.
Otherwise, \change{if} $\Exp{}{f(\change{S'}) \mid \change{Z}} \le c_4 \OPT$,
it follows from \Cref{lem:approx-lemma-2} that
\begin{align}\label{eqn:approx-2}
  \Exp{}{\ALG} &\ge \frac{(1 - \varepsilon) \change{(1-\delta)}\UnconstrainedApprox }{c_3} 
      \bracks*{(1-c_1) \Prob{}{A_{< c_3 k} \change{\mid Z}} - \frac{1}{c_3} - 2 c_4} \OPT.
\end{align}
Taking a weighted average of \Cref{lem:approx-lemma-1} and Equation~\Cref{eqn:approx-2}
\change{with coefficients $\beta \ge 0$ and $1$}
gives us
\begin{align*}
  \frac{\Exp{}{\ALG}}{\OPT} &\ge \frac{(1-\varepsilon)\change{(1-\delta)}\UnconstrainedApprox}{(1+\beta)c_3} \bracks*{\frac{\beta c_1 c_3}{\UnconstrainedApprox} \Prob{}{A_{\ge c_3 k}\change{\mid Z}}
    + (1-c_1) \Prob{}{A_{< c_3 k}\change{\mid Z}} - \frac{1}{c_3} - 2c_4}.
\end{align*}
To bound the approximation factor, we solve the optimization problem
\begin{align*}
  \max_{c_1, c_3, c_4, \beta} \min \set*{\change{(1-\delta)}c_4, \frac{(1-\varepsilon)\change{(1-\delta)}\UnconstrainedApprox}{(1+\beta)c_3}
            \parens*{1 - c_1 - \frac{1}{c_3} - 2c_4} }
\end{align*}
subject to the constraint $\beta c_1 c_3 /\UnconstrainedApprox = 1-c_1$
\change{in order to balance} the two complementary \change{probabilities}.

Now we optimize the constants in the algorithm.
The equality constraint implies that
$c_1 = (1 + \beta c_3 \UnconstrainedApprox^{-1})^{-1}$.
Next, we set the two expressions in the maximin problem to
be equal since one is increasing in $c_4$ and one is decreasing \change{in~$c_4$.}
\change{Canceling the $(1-\delta)$ factors for now,}
this implies that
\begin{align*}
  c_4 = \frac{(1-\varepsilon) \change{\UnconstrainedApprox} \parens*{1-c_1 - c_3^{-1}}}{
    (1+\beta)c_3 + 2(1-\varepsilon)\change{\UnconstrainedApprox}}.
\end{align*}
Using the expressions above for $c_4$ and $c_1$, it follows that
\begin{align}\label{eqn:reduced-program}
  \frac{\Exp{}{\ALG}}{\OPT} \ge 
    \frac{(1-\varepsilon)\change{(1-\delta)\UnconstrainedApprox}\parens*{1- (1+\beta c_3 \alpha^{-1})^{-1} - c_3^{-1}}}{
      (1+\beta)c_3 + 2\change{\UnconstrainedApprox}}.
\end{align}
\Cref{lem:unconstrained-maximization-alg} \change{gives us} $\alpha = 1/4$.
Setting $c_3 = 3$, $\beta = 1/2$, it follows that
$c_1 = 1/7$.
\change{Therefore, putting everything together, we get}
an approximation factor of $\change{(1-\varepsilon)(1-\delta)11/420 \ge 0.026(1-\varepsilon)(1-\delta)}$
by \change{Equation}~\Cref{eqn:reduced-program}.

The proof of the adaptivity and query complexities follow from
\Cref{lem:adaptive_sampling} and \Cref{lem:unconstrained-maximization-alg}
since all $O(\log(k)/\varepsilon)$ thresholds are run in parallel.
This completes the analysis for the \NonmonotoneMaximization algorithm.
\end{proof}

\begin{proof}[Proof of \Cref{thm:new-nonmonotone-maximization}]
The proof is \change{identical} to the proof of \Cref{thm:nonmonotone-maximization}
except that \Cref{thm:new-unconstrained} \change{gives us} $\UnconstrainedApprox = 1/2$.
Setting $c_3=3.556,\beta=0.5664$, we have
$c_1 = 0.198989$, \change{which gives us an approximation factor of} $(1-\varepsilon)(1-\delta)0.0395$.
\change{The query complexity, however, increases because of the new subroutine for
unconstrained maximization in \Cref{thm:new-unconstrained}.}
\end{proof}

\section{Experiments}
\label{sec:experiments}

In this section, we evaluate \NonmonotoneMaximization on three real-world
applications introduced in~\citet{mirzasoleiman2016fast}.
We compare our algorithm with several benchmarks for non-monotone
submodular maximization and demonstrate that it consistently
finds competitive solutions using significantly fewer rounds and queries.
Our experiments build on those in \citet{balkanski2018non},
which plot function values at each round as the algorithms progress.
Additionally, we include plots of $\max_{|S| \le k} f(S)$ for
different constraints $k$ and plots of
the cumulative number of queries an algorithm has used after each round.
For algorithms that rely on a $(1\pm\varepsilon)$-approximation of $\OPT$,
we run all guesses in parallel and record statistics for the approximation
that maximizes the objective function.
We defer the implementation details to the supplementary manuscript.

Next, we briefly describe the benchmark algorithms.
The \Greedy algorithm builds a solution by choosing an
element with the maximum positive marginal gain in each round. This
requires $O(k)$ adaptive rounds and $O(nk)$ oracle queries, and it does not
guarantee a constant approximation.
The \Random algorithm randomly permutes the ground set and
returns the highest-valued prefix of elements.
It uses a constant number of rounds, makes~$O(k)$ queries, and 
also fails to give a constant approximation.
The \RandomLazyGreedy algorithm~\cite{buchbinder2016comparing} lazily builds a
solution by randomly selecting one of the $k$ elements with highest marginal
gain in each round.
This gives a $(1/e-\varepsilon)$-approximation in $O(k)$ adaptive rounds using
$O(n)$ queries.
The \Fantom algorithm~\cite{mirzasoleiman2016fast} is similar to
\Greedy and robust to intersecting matroid and knapsack constraints.
For a cardinality constraint, it gives a $(1/6 - \varepsilon)$-approximation
using $O(k)$ adaptive rounds and $O(nk)$ queries.
The \Blits algorithm~\cite{balkanski2018non} constructs a solution by
randomly choosing blocks of high-valued elements, giving a
$(1/(2e)-\varepsilon)$-approximation in $O(\log^2(n))$ rounds.
While \Blits is exponentially faster than the previous algorithms,
it requires $O(\OPT^2 n \log^2(n)\log(k))$ oracle queries.

\begin{figure*}
\centering

\begin{subfigure}{.247\linewidth}
  \centering
  \includegraphics[width=1.0\linewidth]{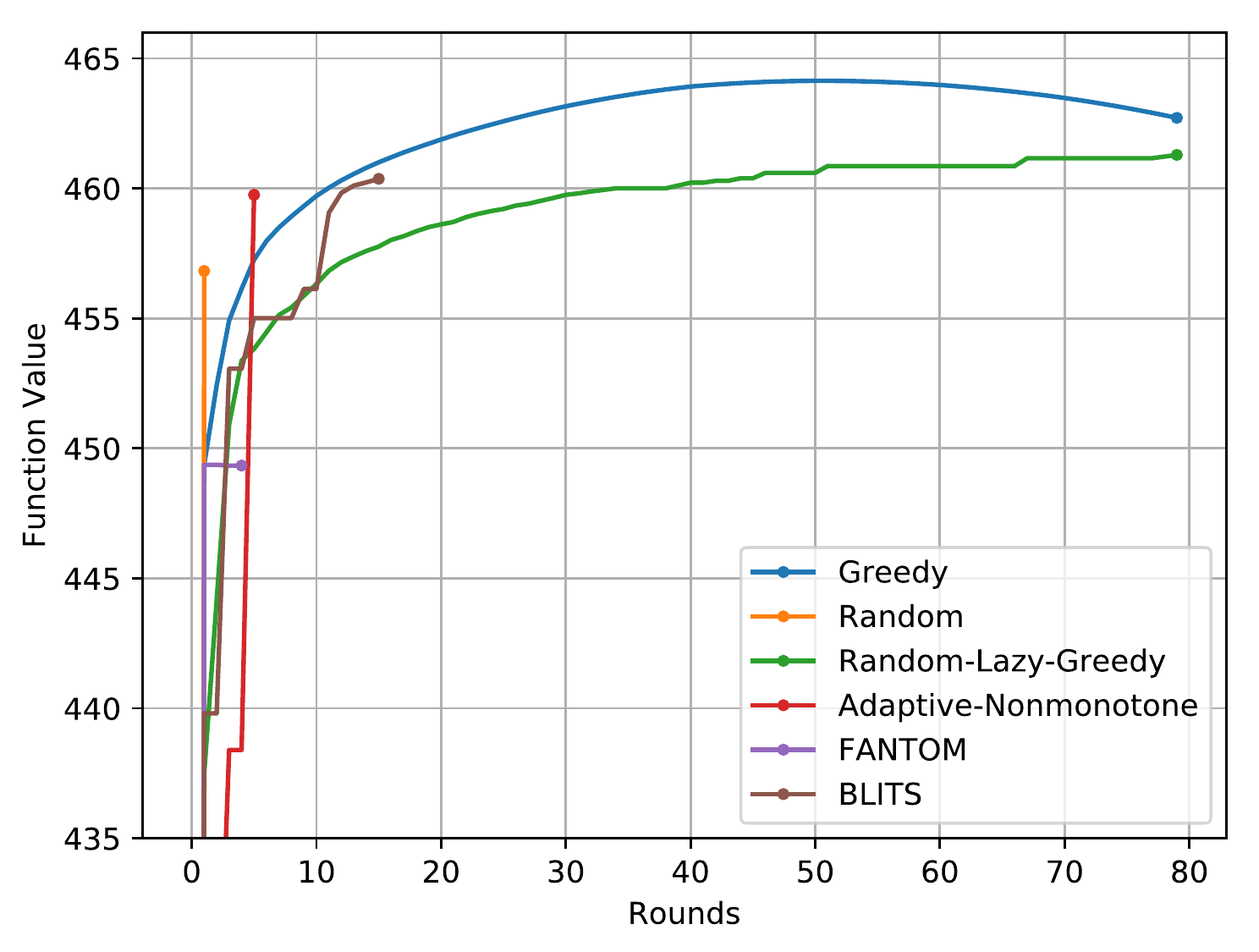}
  \caption{Image Summarization}
  \label{fig:images-rounds}
\end{subfigure}%
\begin{subfigure}{.247\linewidth}
  \centering
  \includegraphics[width=1.0\linewidth]{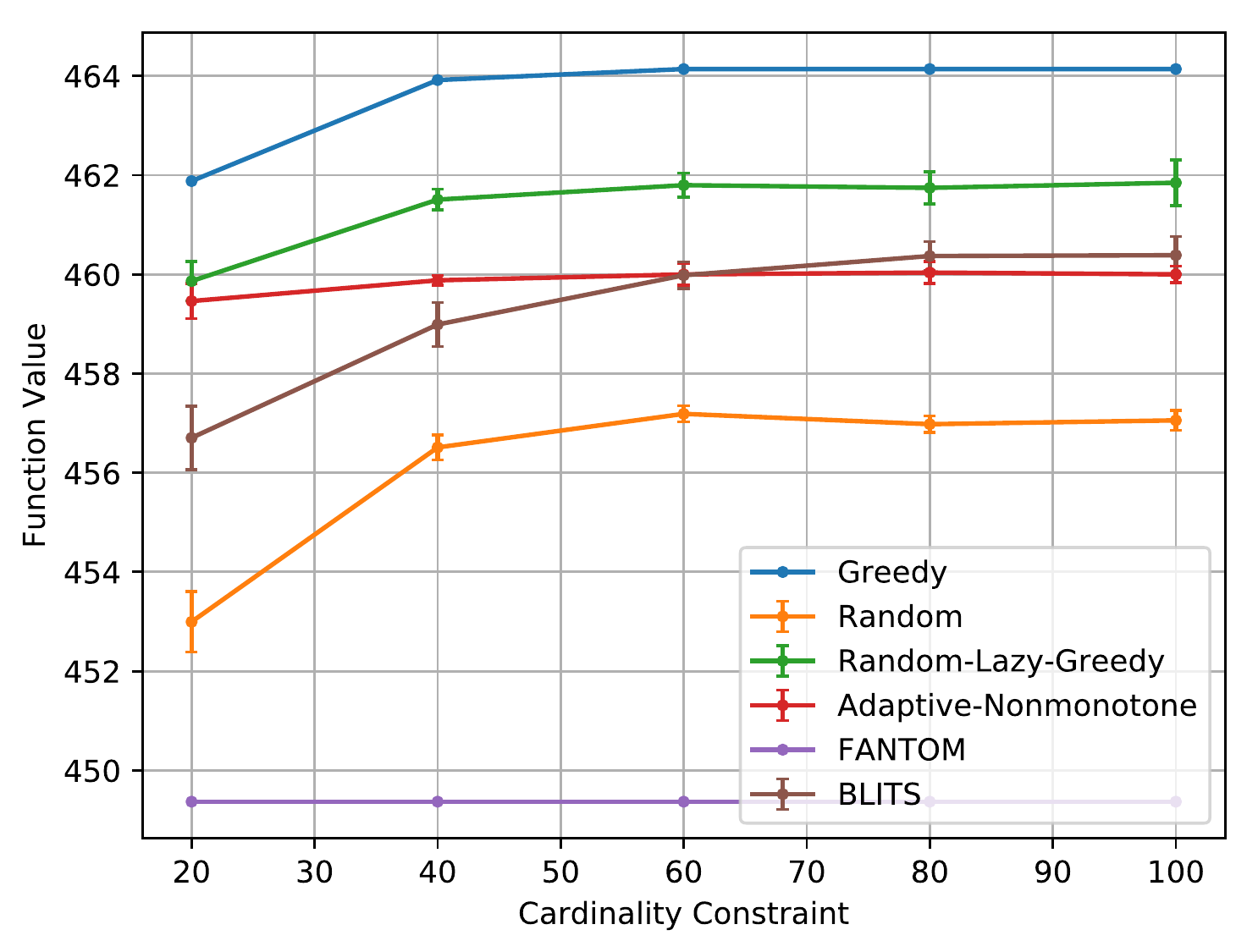}
  \caption{Image Summarization}
  \label{fig:images-constraints}
\end{subfigure}
\begin{subfigure}{.247\linewidth}
  \centering
  \includegraphics[width=1.0\linewidth,height=3.2cm]{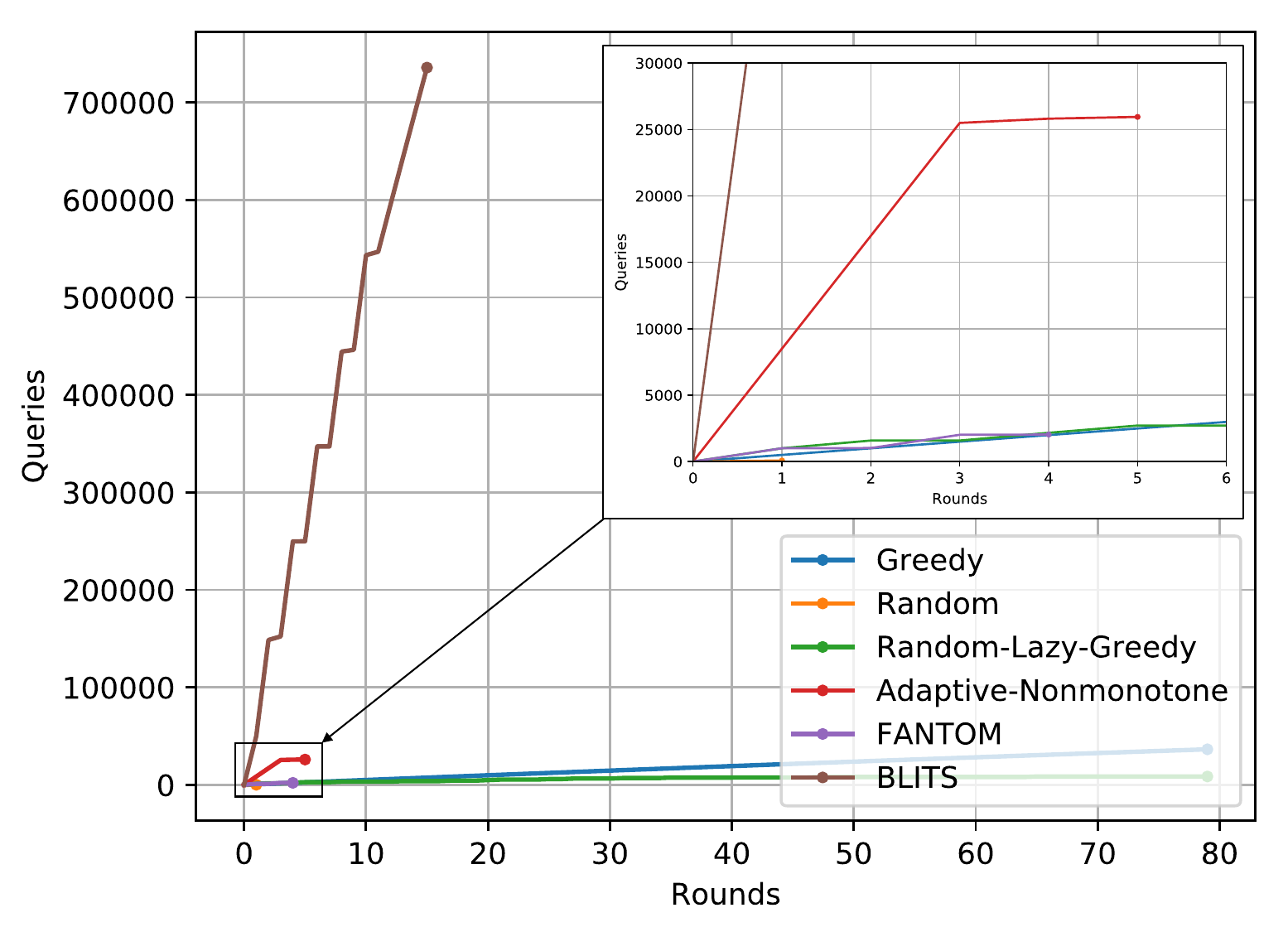}
\caption{Image Summarization}
\label{fig:images-queries}
\end{subfigure}
\begin{subfigure}{.247\linewidth}
  \centering
  \includegraphics[width=1.0\linewidth]{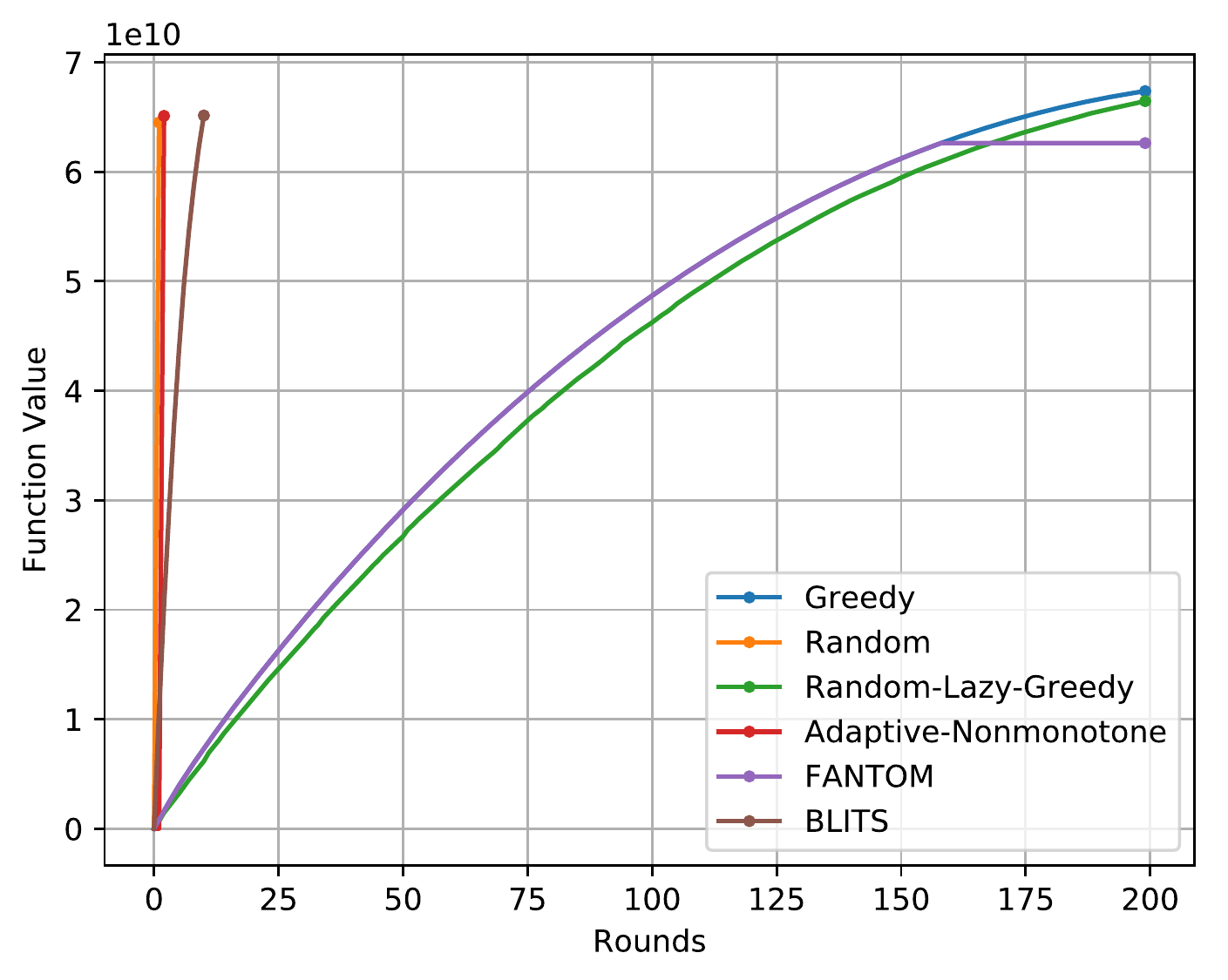}
  \caption{Movie Recommendation}
  \label{fig:movie-rounds}
\end{subfigure}%


\begin{subfigure}{.247\linewidth}
  \centering
  \includegraphics[width=1.0\linewidth]{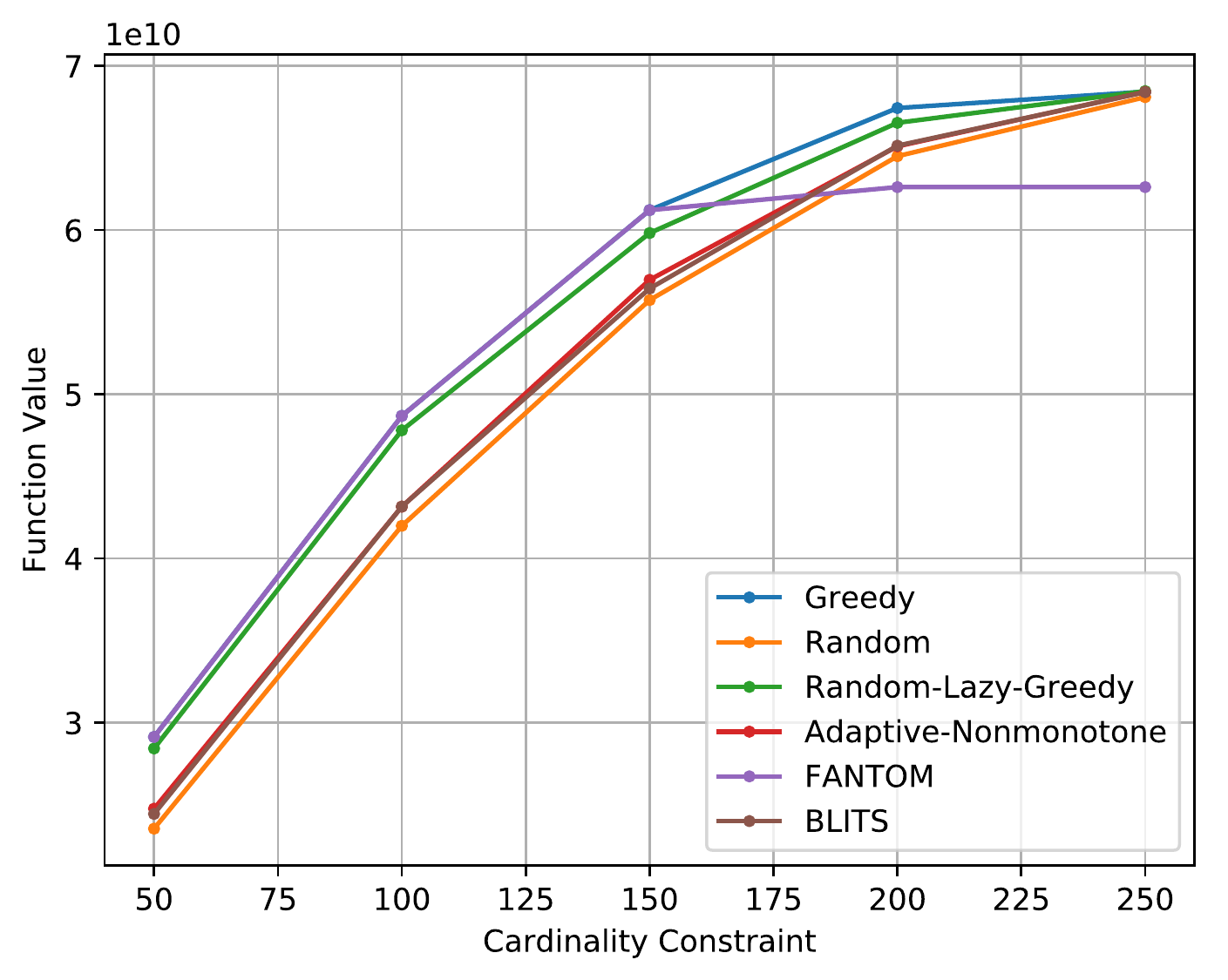}
\caption{Movie Recommendation}
\label{fig:movie-constraints}
\end{subfigure}
\begin{subfigure}{.247\linewidth}
  \centering
  \includegraphics[width=1.0\linewidth]{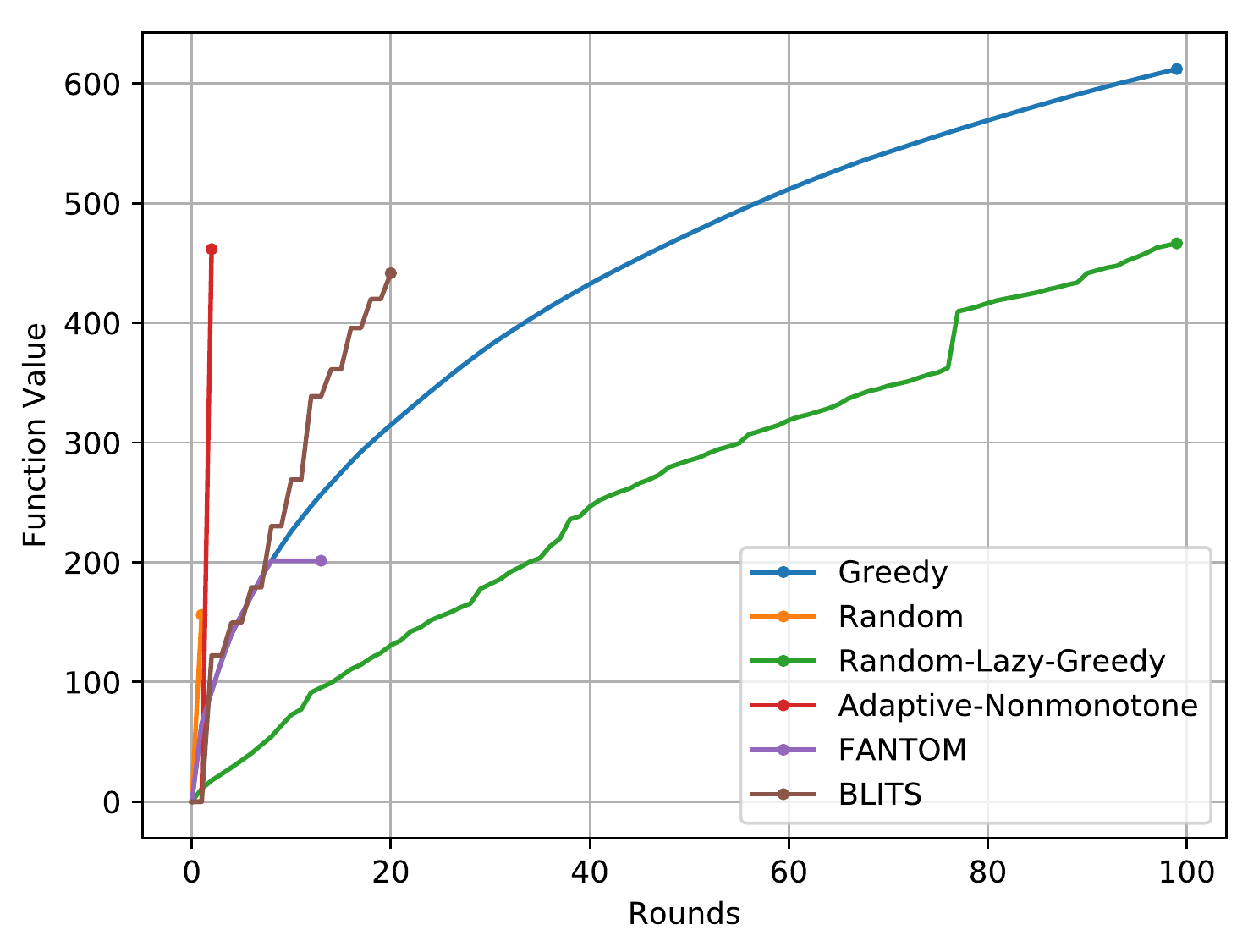}
  \caption{Revenue Maximization}
  \label{fig:youtube-rounds}
\end{subfigure}%
\begin{subfigure}{.247\linewidth}
  \centering
  \includegraphics[width=1.0\linewidth]{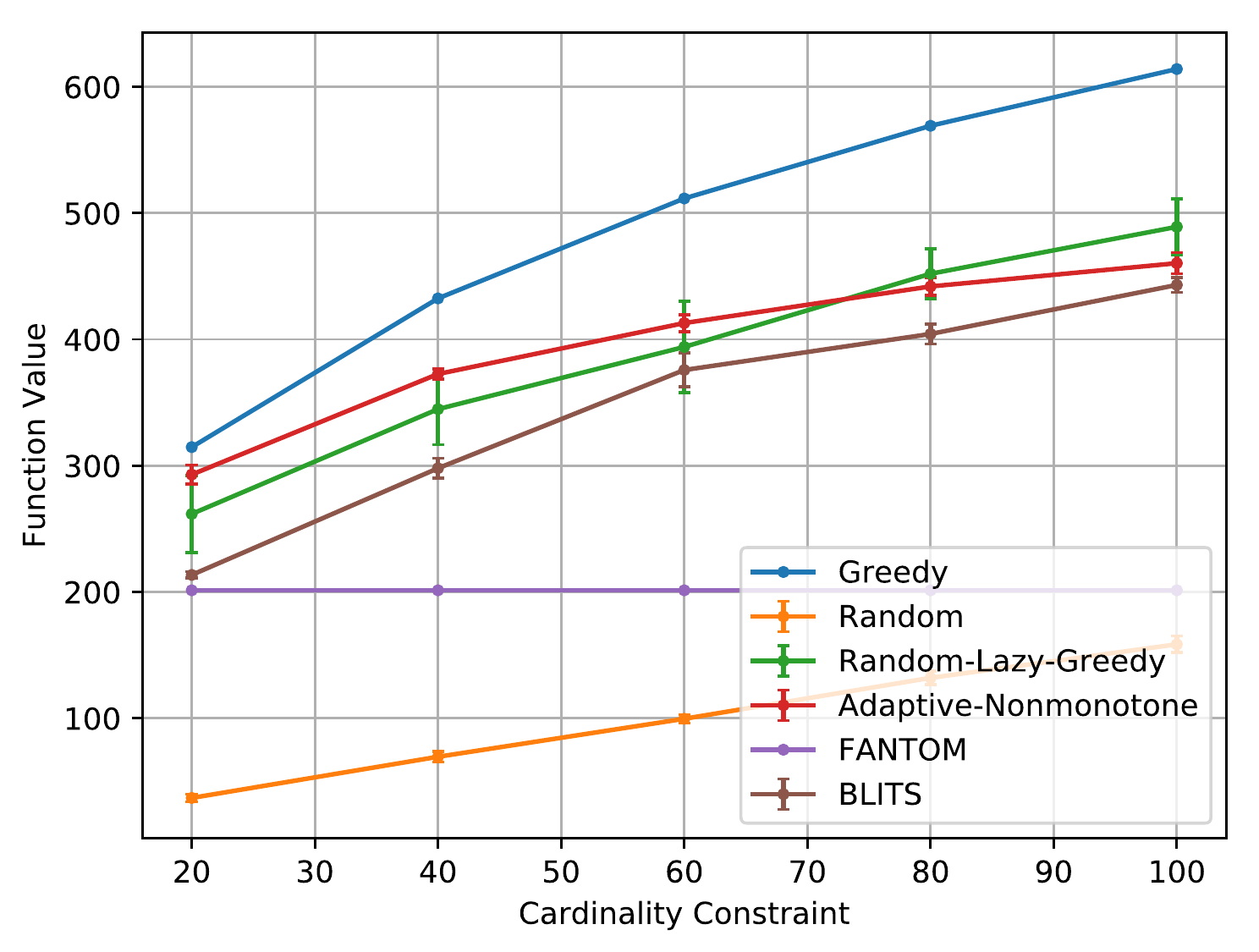}
  \caption{Revenue Maximization}
  \label{fig:youtube-constraints}
\end{subfigure}
\begin{subfigure}{.247\linewidth}
  \centering
  \includegraphics[width=1.0\linewidth,height=3.2cm]{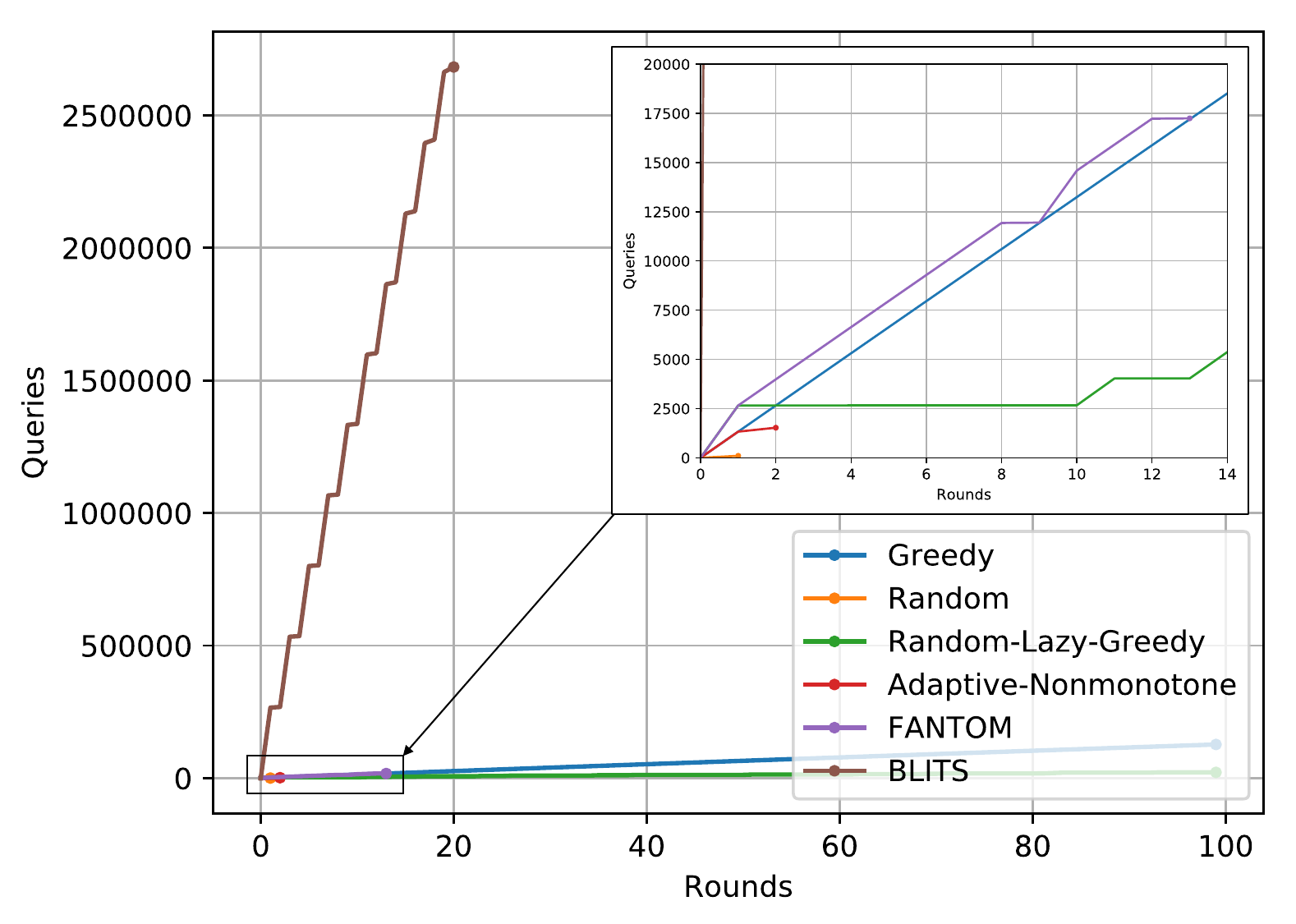}
\caption{Revenue Maximization}
\label{fig:youtube-queries}
\end{subfigure}

\caption{Performance of \NonmonotoneMaximization compared to several benchmarks
for image summarization on the CIFAR-10 dataset,
movie recommendation on the MovieLens 20M dataset,
and revenue maximization on the top 5,000 communities of YouTube.
}
\label{fig:all-plots}
\end{figure*}

\textbf{Image Summarization.}
The goal of image summarization is to find
a small, representative subset from a
large collection of images that accurately describes the entire dataset.
The quality of a summary is typically modeled by
two contrasting requirements: coverage and diversity.
Coverage measures the overall representation of the dataset,
and diversity encourages succinctness by penalizing summaries that contain
similar images.
For a collection of images $N$, the objective function
we use for image summarization is
\begin{equation*}
\label{eqn:image-objective}
  f(S) = \sum_{i \in N} \max_{j \in S} s_{i,j}
        - \frac{1}{|N|} \sum_{i \in S} \sum_{j \in S} s_{i,j},
\end{equation*}
where $s_{i,j}$ is the similarity between image $i$ and image~$j$.
The trade-off between coverage and diversity naturally gives rise to
non-monotone submodular functions.
We perform our image summarization experiment 
on the CIFAR-10 test set~\cite{krizhevsky2009learning},
which contains 10,000 $32 \times 32$ color images.
The image similarity $s_{i,j}$ is measured by the cosine similarity
of the 3,072-dimensional pixel vectors for images $i$ and $j$.
Following \citet{balkanski2018non}, we randomly select $500$ images
to be our subsampled ground set since this experiment is throttled by
the number and cost of oracle queries.

We set $k=80$ in \Cref{fig:images-rounds} and
track the progress of the algorithms in each round.
\Cref{fig:images-constraints} compares the solution quality for different
constraints $k \in (20,40,60,80,100)$ and demonstrates that
\NonmonotoneMaximization and \Blits find substantially better solutions
than \Random.
We use $10$ trials for each stochastic algorithm and plot the mean and standard
deviation of the solutions.
We note that \Fantom performs noticeably worse than the others
because it stops choosing elements when their (possibly positive) marginal gain
falls below a fixed threshold.
We give a picture-in-picture plot of the query complexities in
\Cref{fig:images-queries} to highlight the difference in overall cost of the
estimators for \NonmonotoneMaximization and \Blits.

\textbf{Movie Recommendation.}
Personalized movie recommendation systems aim to provide short, comprehensive
lists of high-quality movies for a user based on the ratings of similar users.
In this experiment, we randomly sample 500 movies from the 
MovieLens 20M dataset~\cite{harper2015movielens}, which contains
20 million ratings for 26,744 movies by 138,493 users. We use
\textsc{Soft-Impute}~\cite{mazumder2010spectral}
to predict the rating vector for each movie via low-rank
matrix completion, and we define the similarity of two movies $s_{i,j}$
as the inner product of the rating vectors for movies $i$ and~$j$.
Following \citet{mirzasoleiman2016fast}, we use the objective function
\begin{equation*}
  f(S) = \sum_{i \in N} \sum_{j \in S} s_{i,j} 
          - \lambda \sum_{i \in S} \sum_{j \in S} s_{i,j},
\end{equation*}
with $\lambda = 0.95$. Note that if $\lambda=1$ we have the cut function.

We remark that experiment is similar to solving max-cut on an
Erd\"os-R\'enyi graph.
In \Cref{fig:movie-rounds} we set $k=200$,
and in \Cref{fig:movie-constraints} we consider $k \in (50,100,150,200,250)$.
The \Greedy algorithm performs moderately better than \Random as the constraint
approaches $k=250$, and all other algorithms except \Fantom are 
sandwiched between these benchmarks.
The query complexities are similar to \Cref{fig:images-queries}, so we
exclude this plot to keep \Cref{fig:all-plots} compact.

\textbf{Revenue Maximization.}
In this application, our goal is to choose a subset of users in a social
network to advertise a product in order to maximize its revenue.
We consider the top 5,000 communities of the YouTube network~\cite{snapnets}
and subsample the graph by restricting to~25
randomly chosen communities~\cite{balkanski2018non}.
The resulting network has 1,329 nodes and 3,936 edges.
We assign edge weights according to the continuous
uniform distribution $\mathcal{U}(0,1)$, and
we measure influence using the non-monotone function
\begin{equation*}
  f(S) = \sum_{i \in N\setminus S} \sqrt{\sum_{j \in S} w_{i,j}}.
\end{equation*}

In \Cref{fig:youtube-rounds}, we set $k=100$ and observe that
\NonmonotoneMaximization significantly outperforms \Fantom and \Random.
\Cref{fig:youtube-constraints} shows a stratification
of the algorithms for $k \in (20,40,60,80,100)$,
and \Cref{fig:youtube-queries}
is similar to the image summarization experiment.
We note that the inner plot in \Cref{fig:youtube-queries}
shows that for the optimal threshold of \NonmonotoneMaximization, the
number of candidates instantly falls below $3k$ and the algorithm outputs a
random prefix of high-valued elements in the next round.

\section{Conclusions}
\label{sec:conclusion}

We give the first algorithm for maximizing a non-monotone submodular function
subject to a cardinality constraint that achieves a constant-factor
approximation with nearly optimal adaptivity complexity.
The query complexity of this algorithm is also nearly optimal and considerably less than in previous works.
While the approximation guarantee is only $0.039-\varepsilon$,
our empirical study shows that for several real-world applications
\NonmonotoneMaximization
finds solutions that are competitive with benchmarks for
non-monotone submodular maximization
and requires significantly fewer rounds and oracle queries.

\newpage

\section*{Acknowledgements}
We thank the anonymous reviewers for their valuable feedback.
\change{We also especially thank Canh Pham Van for bringing
an error in the ICML 2019 manuscript to our attention,
where we implicitly assumed monotonicity by using an
unmodified version of threshold sampling algorithm in~\citet{fahrbach2019submodular}.}
M.F.\ was supported in part by an NSF
Graduate Research Fellowship
under grant DGE-1650044.
Part of this work was done while he was a summer
intern at Google Research, Z\"urich.

\bibliographystyle{icml2019}
\bibliography{references}

\appendix
\newpage
\onecolumn{
\section{Missing Analysis from \Cref{sec:preliminaries}}

\change{
\NonmonotoneThresholdSamplingAlg*

\begin{proof}
We start by proving that the adaptivity complexity of \NMTS is $O(\log(n/\delta)/\varepsilon)$.
The number of rounds is $r=O(\log_{(1-\varepsilon)^{-1}}(n/\delta))$ by construction,
and there are polynomially-many queries in each round,
all of which are independent and only rely on the state of $S$ at the
beginning of the round.
Finally, updating~$S'$ on Line~\ref{step:post_filter}
requires one adaptive round since each post-filtering step happens
with respect to $S$ and prefixes of a fixed permutation.

\paragraph{Property 1.}
The expected $O(n/\varepsilon)$ query complexity
follows from~\citet[Lemma~3.2]{fahrbach2019submodular}
for the original \AdaptiveSampling algorithm
since this part of their analysis holds for general submodular functions.
The only change we need to consider for \NMTS
is the number of queries used in the construction of
$S'$ on Line~\ref{step:post_filter}.
For each $x_i \in T$, the algorithm makes two oracle calls to
decide if $\Delta(x_i,S \cup\{x_1,\dots,x_{i-1}\}) \ge \tau$,
so the total query complexity of Line~\ref{step:post_filter} is $2|S| = O(n)$.
Therefore, the expected query complexity of the algorithm is $O(n/\varepsilon)$.

\paragraph{Property 2.}
If $|S| < k$ when the algorithm terminates,
then \NMTS did not break on Line~17.
Therefore, the algorithm either (a) breaks early on Line~7,
in which the property holds,
or (b) finishes all $r$ rounds.
In the case where all $r$ rounds finish and $|S| < k$,
it follows from the proof of
\citet[Lemma 3.6]{fahrbach2019submodular} that
\begin{align*}
    \Prob{}{|A_r| \ge c k \mid Z}
    &\le
    \Prob{}{|A_r| \ge 1 \mid Z} \\
    &\le
    \delta / 2,
\end{align*}
since we always take $c \ge 1$.
Therefore, with probability at least $1 - \delta/2$,
the number of remaining candidates is $|A| < c k$.

\paragraph{Property 3.}
For each round $\ell \in [r]$, let:
\begin{itemize}
\item $A_{\ell}$ be the value of $A$ after the filtering step in Line~6
\item $T_{\ell}$ be the value of $T$
\item $S_{\ell}$ be the value of $S$ before the update in Line~16
\end{itemize}
We use the round-specific values for these random sets to prove the next three properties.

Let $T_{\ell}' \subseteq T_{\ell}$ denote the set of elements in $T_{\ell}$ that survive the post-filtering step
in~Line~\ref{step:post_filter}.
We want to show that
\[
    \Exp{}{|T'_{\ell}| \mid Z} \geq (1 - \varepsilon) \Exp{}{|T_{\ell}| \mid Z}.
\]

In the for loop on Line~9, we go over iterations $i = 0, 1, \dots, m$. 
Let $t^* = \min\{\lfloor (1+\hat\varepsilon)^i \rfloor, |A|\}$ be the final value we 
set for~$t$ in the last iteration of the for loop on Lines~9--12. 
In other words, we break on Line~12 during iteration $0 \leq i \leq m$,
or the loop finishes.
If $t^* = 1$, then we have $|T'_{\ell}| = |T_\ell| = 1$ by the definition of $A_\ell$.

Therefore, assume that $t^* \ge 2$.
The size $t \le t^*$ in the previous iteration, where
\[
    t = \floor{(1+\hat\varepsilon)^{i-1}} \le \floor{(1+\hat\varepsilon)^{i}} = t^*,
\]
satisfies $\Exp{}{I_t \mid Z} \ge 1 - 2\hat\varepsilon$ since the algorithm
did not break on Line~11 in iteration $i-1$.
Thus,
$\Exp{}{I_{t'} \mid Z} \ge 1 - 2\hat\varepsilon$
for any $0 \le t' \le t$.
Note that we do not necessarily sample $t^*$ elements on Line~13 since the sample size is capped at $k-|S|$.

It follows from a decomposition of $|T'_{\ell}|$
into indicator random variables that
\begin{align} \nonumber
    \Exp{}{|T_{\ell}'| \mid Z, |T_{\ell}| = b}
    &=
    \sum_{t'=0}^{b-1} \E[I_{t'} \mid Z] \\ \nonumber
    &\ge
    \sum_{t'=0}^{\min\{t, b-1\}} \E[I_{t'} \mid Z] \\ \nonumber
    &\ge
    \parens*{\min\{t + 1, b\}} (1-2\hat\varepsilon) \\ \nonumber
    &\ge
    \frac{b}{1+\hat\varepsilon} (1-2\hat\varepsilon) \\
    &\ge
    (1-\varepsilon) \Exp{}{|T_{\ell}| \mid Z, |T_{\ell}| = b}. \label{eq:LowerBoundT'Condition-on-size-of-T}
\end{align}

To prove the second to last inequality, we get help from two observations:
$t^* \le (1+\hat\varepsilon)(t + 1)$, and $b \le t^*$.
Therefore, $\min\{t + 1, b\}$ is at least $b/(1+\hat\varepsilon)$.

Since Inequality~\eqref{eq:LowerBoundT'Condition-on-size-of-T} holds for any value of $|T_{\ell}| = b$, we can summarize all of them in the following form:
\[
    \Exp{}{|T_{\ell}'| \mid Z} \geq (1-\varepsilon) \Exp{}{|T_{\ell}| \mid Z}.
\]

Putting everything together,
\begin{align*}
    \Exp{}{|S'| \mid Z}
    &=
    \sum_{i=1}^r \Exp{}{|T_i'| \mid Z} \\
    &\ge
    (1 - \varepsilon) \sum_{i=1}^r \Exp{}{|T_i| \mid Z} \\
    &=
    (1 - \varepsilon) \Exp{}{|S| \mid Z}.
\end{align*}

\paragraph{Property 4.}
For each $T_\ell$, let the shuffled elements on~Line~14 of \Cref{alg:sampling} be
$(x_{\ell,1},x_{\ell,2},\dots,x_{\ell,|T_\ell|})$.
It follows from submodularity and the definition of $S'$ that
\begin{align*}
  f(S')
  &=
  \sum_{\ell=1}^{r}
  \sum_{j=1}^{|T_\ell|}
    \Delta(x_{\ell,j}, S' \cap (S_\ell \cup \{x_{\ell,1},\dots,x_{\ell,j-1}\}))
    \cdot
    \mathbf{1}_{S'}(x_{\ell,j}) \\
  &\ge
  \sum_{\ell=1}^{r}
  \sum_{j=1}^{|T_\ell|}
    \Delta(x_{\ell,j}, S_\ell \cup \{x_{\ell,1},\dots,x_{\ell,j-1}\})
    \cdot
    \mathbf{1}_{S'}(x_{\ell,j}) \\
  &\ge
  \sum_{\ell=1}^{r}
  \sum_{j=1}^{|T_\ell|}
    \tau
    \cdot
    \mathbf{1}_{S'}(x_{\ell,j}) \\
  &=
    \tau \cdot |S'|,
\end{align*}
where $\mathbf{1}_{B}(x)$ is the indicator function for a set $B$ defined as
$\mathbf{1}_{B}(x) = 1$ if $x \in B$ and
$\mathbf{1}_{B}(x) = 0$ if $x \not\in B$.

\paragraph{Property 5.}
For any $x \in N$,
let $X_\ell$ be an indicator random variable for the event $x \in T_\ell$.
It follows that
\begin{align*}
  \Prob{}{x \in S}
  &=
  \sum_{\ell=1}^r \Exp{}{X_\ell}
  \le
  \sum_{\ell=1}^r \Exp{}{\frac{|T_\ell|}{|A_\ell|}}
  \le
  \frac{1}{\morteza{c} k}
  \sum_{\ell=1}^r \Exp{}{|T_\ell|}
  =
  \frac{1}{\morteza{c} k} \Exp{}{|S|}
  \le
  \frac{1}{\morteza{c} k} \cdot k
  =
  \frac{1}{\morteza{c}}.
\end{align*}
\morteza{We note that one can also upper bound $\Prob{}{x \in S \mid Z}$ as follows. If we condition on $Z$, the choices the algorithm makes in rounds $1,2, \dots, \ell$, and the size of $T_{\ell}$, then we can still upper bound the expected value of $X_{\ell}$ similarly. 
Summing over all possible outcomes of the algorithm and also the values of $|T_{\ell}|$ over rounds $\ell = 1,2,\dots, r$, we achieve the overall upper bound of $1/c$ for $\Prob{}{x \in S \mid Z}$.}
Finally, we have
\begin{align*}
    \Prob{}{x \in S'}
    =
    \Prob{}{x \in S' \mid x \in S} \Pr(x \in S)
    \le
    \Pr(x \in S)
    \le \frac{1}{\morteza{c}},
\end{align*}
which completes the proof.
\end{proof}
}

\unconstrainedMaximizationAlg*

\begin{proof}
First assume that $\OPT_{A} > 0$. We start by bounding the individual failure
probability
\[
  \Prob{}{f\parens*{R_i} \le (1/4 - \varepsilon)\OPT_{A}}
  \le
  \frac{3}{3+4\varepsilon}.
\]
By \Cref{lem:unconstrained-maximization} we have
$\Exp{}{f(R_i)} \ge (1/4)\OPT_{A}$.
Using an analog of Markov's inequality to upper bound
$\E[f(R_i)]$, it follows that
\begin{align*}
  \frac{\OPT_{A}}{4} \le \Exp{}{f\parens*{R_i}}
  \le p \parens*{\frac{1}{4} - \varepsilon}\OPT_{A} + (1-p)\OPT_{A}.
\end{align*}
Therefore, we must have $p \le 3/(3+4\varepsilon)$.
Since the subsets $R_i$ are chosen independently, our choice of $t$ gives us a
total failure probability of
\begin{align*}
  \Prob{}{f(S) \le (1/4-\varepsilon)\OPT_{A}}
    &= \prod_{i=1}^t \Prob{}{f\parens{R_i} \le (1/4 - \varepsilon) \OPT_{A}}\\
    &\le \parens*{\frac{3}{3+4\varepsilon}}^t\\
    &\le \delta.
\end{align*}
This completes the proof that with probability at least
$1-\delta$ we have $f(S) \ge (1/4 - \varepsilon)\OPT_{A}$.
To prove the adaptivity complexity, notice that all subsets $R_i$ can be
generated and evaluated at once in parallel, hence the need for only one
adaptive round. For the query complexity, we use the 
inequality $\log(1+(4/3)\varepsilon) \ge 2\varepsilon/3$, which holds for all
$\varepsilon \le 1/4$.
\end{proof}


\section{Missing Analysis from \Cref{sec:nonmonotone}}
\label{app:nonmonotone}

\downsample*

\begin{proof}
Fix an ordering $x_1, x_2,\dots, x_{|S|}$ on the elements in $S$.
Expanding the expected value $\E[f(T)]$ and using submodularity,
it follows that
\begin{align*}
  \Exp{}{f(T)} &= \frac{1}{\binom{|S|}{k}} \sum_{R \in \change{\binom{S}{k}}}
    \sum_{x_i \in R} \Delta\parens*{x_i, \set*{x_1, x_2, \dots, x_{i-1}} \cap R}\\
  &\ge
  \frac{1}{\binom{|S|}{k}} \sum_{R \in \change{\binom{S}{k}}}
    \sum_{x_i \in R} \Delta\parens*{x_i, \set*{x_1, x_2, \dots, x_{i-1}}}\\
  &= \frac{1}{\binom{|S|}{k}} \sum_{i=1}^{|S|}  \binom{|S|-1}{k-1}
    \Delta\parens*{x_i, \set*{x_1,x_2,\dots,x_{i-1}}}\\
  &= \frac{k}{|S|} \cdot f(S),
\end{align*}
which completes the proof.
\end{proof}

\begin{lemma}
\label{lemma:independent-of-A}
For any \change{sets $S, A \subseteq N$} and optimal solution $S^*$, if
$S_2^* = S^* \setminus A$, then
\[
  f\parens*{S_2^*} - f\parens*{S_2^* \cup S} \le f\parens*{S^*} - f\parens*{S^* \cup S}.
\]
\end{lemma}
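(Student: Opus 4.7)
The inequality to prove is equivalent, after rearranging, to
\[
  \Delta(S, S^*) \;\le\; \Delta(S, S_2^*),
\]
where $\Delta(S, X) = f(X \cup S) - f(X)$. Since $S_2^* = S^* \setminus A \subseteq S^*$, this is precisely the ``marginals decrease on supersets'' form of submodularity, extended from singletons to an arbitrary set $S$. So the plan is to reduce the claim to the standard singleton form of submodularity via a telescoping decomposition of $\Delta(S, \cdot)$.

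\textbf{Step 1: Rewrite the desired inequality.} I would first note that the inequality $f(S_2^*) - f(S_2^* \cup S) \le f(S^*) - f(S^* \cup S)$ is equivalent to
\[
  f(S^* \cup S) - f(S^*) \;\le\; f(S_2^* \cup S) - f(S_2^*),
\]
i.e., the marginal contribution of $S$ on top of the smaller set $S_2^*$ is at least its marginal contribution on top of the larger set $S^*$.

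\textbf{Step 2: Telescope over the elements of $S$.} Fix an arbitrary enumeration $S = \{s_1, s_2, \ldots, s_m\}$ and write, for any base set $X$,
\[
  f(X \cup S) - f(X) \;=\; \sum_{i=1}^{m} \Bigl[ f\bigl(X \cup \{s_1,\ldots,s_i\}\bigr) - f\bigl(X \cup \{s_1,\ldots,s_{i-1}\}\bigr) \Bigr] \;=\; \sum_{i=1}^m \Delta\bigl(s_i,\, X \cup \{s_1,\ldots,s_{i-1}\}\bigr).
\]
Applying this with $X = S_2^*$ and with $X = S^*$ reduces the claim to a term-by-term inequality between singleton marginals.

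\textbf{Step 3: Compare term by term using singleton submodularity.} For every $i$, the base sets satisfy $S_2^* \cup \{s_1,\ldots,s_{i-1}\} \subseteq S^* \cup \{s_1,\ldots,s_{i-1}\}$ because $S_2^* \subseteq S^*$. The defining submodularity inequality for singletons then gives
\[
  \Delta\bigl(s_i,\, S_2^* \cup \{s_1,\ldots,s_{i-1}\}\bigr) \;\ge\; \Delta\bigl(s_i,\, S^* \cup \{s_1,\ldots,s_{i-1}\}\bigr).
\]
Summing these inequalities over $i = 1, \ldots, m$ yields $f(S_2^* \cup S) - f(S_2^*) \ge f(S^* \cup S) - f(S^*)$, which is the desired bound.

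This proof is essentially the standard ``submodularity generalizes from singletons to arbitrary sets'' argument, so I do not anticipate any real obstacle; the only care needed is to fix a common enumeration of $S$ so that the telescoping decompositions on the two sides of the inequality are aligned and can be compared one summand at a time.
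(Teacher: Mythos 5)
There is a genuine gap in Step 3. The defining submodularity inequality $\Delta(x,P) \ge \Delta(x,Q)$ for $P \subseteq Q$ is only available when $x \notin Q$, and your term-by-term comparison violates this exactly for elements $s_i \in S \cap (S^* \setminus S_2^*) = S \cap S^* \cap A$. For such an $s_i$ the right-hand marginal $\Delta\bigl(s_i, S^* \cup \{s_1,\dots,s_{i-1}\}\bigr)$ is $0$ (the element is already present), while the left-hand marginal $\Delta\bigl(s_i, S_2^* \cup \{s_1,\dots,s_{i-1}\}\bigr)$ can be strictly negative, since $f$ is non-monotone. So the ``marginals of sets decrease on supersets'' principle you invoke is simply false here: take $N = \{a\}$ with $f(\emptyset) = 1$, $f(\{a\}) = 0$ (nonnegative, submodular, non-monotone), $S = \{a\}$, and base sets $B = \emptyset \subseteq C = \{a\}$; then $f(C \cup S) - f(C) = 0 > -1 = f(B \cup S) - f(B)$. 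Your reduction proves the lemma only when $f$ is monotone or when $S$ happens to be disjoint from $S_1^* = S^* \cap A$, and neither hypothesis is available (the returned $S$ can intersect the returned $A$, and the lemma is asserted for all $A \subseteq N$).

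This is precisely where the paper's proof uses the hypothesis that $S^*$ is an \emph{optimal} solution, which your argument never touches. The paper reduces the claim, via the submodularity inequality $f(X \cap Y) + f(X \cup Y) \le f(X) + f(Y)$, to showing $f(S_2^*) \le f\bigl(S_2^* \cup (S_1^* \cap S)\bigr)$, then telescopes over the elements of $S_1^* \cap S$ (all of which lie in $S^*$) and argues each marginal is nonnegative: if some $x^* \in S_1^* \cap S$ had negative marginal over $S_2^*$ plus the previously added $S^*$-elements, submodularity would force $\Delta\bigl(x^*, S^* \setminus \{x^*\}\bigr) < 0$, so the feasible set $S^* \setminus \{x^*\}$ would have value exceeding $\OPT$, a contradiction. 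In short, the problematic negative marginals that break your step are ruled out by optimality of $S^*$, not by submodularity alone; to repair your proof you would need to add exactly this optimality argument for the summands with $s_i \in S \cap S_1^*$.
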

\begin{proof}
It is equivalent to show that
\begin{align*}
  f(S_2^*) + f(S^* \cup S) \le f(S^*) + f(S_2^* \cup S).
\end{align*}
For any sets $X, Y \subseteq N$, we have 
$f(X \cap Y) + f(X\cup Y) \le f(X) + f(Y)$ by the definition of submodularity.
It follows that
\begin{align*}
  f\parens*{S^* \cap \parens*{S_2^* \cup S}}
  + f\parens*{S^* \cup S}
  \le
  f(S^*) + f(S^*_2 \cup S).
\end{align*}
Therefore, it suffices to instead show that
\begin{equation}
\label{eqn:goal}
  f(S_2^*) \le f(S^* \cap (S_2^* \cup S)).
\end{equation}
Let $S_1^* = S^* \cap A$ and write
\begin{align*}
  S^* \cap \parens*{S_2^* \cup S}
    &= \parens*{S^* \cap S_2^*} \cup \parens*{S^* \cap S}\\
    &= S_2^* \cup \parens*{S_1^* \cap S}.
\end{align*}
Next, fix an ordering $x_1,x_2,\dots,x_\ell$ on the elements in $S^*$.
Summing the consecutive marginal gains of the elements in
the set $S_1^* \cap S$ according to this order gives
\begin{align}
\label{eqn:opt-marginal-gains}
  f\parens*{S_2^* \cup \parens*{S_1^* \cap S}}
  = f(S_2^*) + \sum_{x_1,\dots,x_\ell \in S_1^* \cap S} \Delta\parens*{x_i, S_2^* \cup \set*{x_1,\dots,x_{i-1}}}.
\end{align}
We claim that each marginal contribution in \Cref{eqn:opt-marginal-gains} is
nonnegative.
Assume for contradiction this is not the case.
Let $x^* \in S_1^* \cap S$ be the first element violating this property,
and let $x_{-1}^*$ be the previous element
according to the ordering.
By submodularity, 
\begin{align*}
  0 >
  \Delta\parens*{x^*, S_2^* \cup \bigcup_{x_1,\dots,x_{-1}^* \in S_1^*\cap S} \set*{x_i}}
  \ge
  \Delta\parens*{x^*, S_2^* \cup \bigcup_{x_1,\dots,x_{-1}^* \in S_1^*} \set*{x_i}},
\end{align*}
which implies
$f(S^* \setminus \{x^*\}) > f(S^*) = \OPT$, a contradiction.
Therefore, the inequality in \Cref{eqn:goal} is true, as desired.
\end{proof}

\section{Implementation Details from \Cref{sec:experiments}}
We set $\varepsilon = 0.25$ for all of
the algorithms except \RandomLazyGreedy, which we run with $\varepsilon = 0.01$.
Since some of the algorithms require a guess of $\OPT$,
we adjust~$\varepsilon$ accordingly and fairly.
We remark that all algorithms give reasonably similar results for any
$\varepsilon \in [0.05, 0.50]$.
We set the number of queries to be $100$ for the estimators in
\NonmonotoneMaximization and \Blits,
although for the theoretical guarantees these should be
$\Theta(\log(n)/\varepsilon^2)$ and
$\Theta(\OPT^2 \log(n)/\varepsilon^2)$, respectively.
For context, the experiments in \citet{balkanski2018non} set the number of
samples per estimate to be 30.
Last, we set the number of outer rounds for \Blits
to be $10$, which also matches \citet{balkanski2018non} since the
number needed for provable guarantees is
$r = 20 \log_{1+\varepsilon/2}(n)/\varepsilon$, which
is too large for these datasets.

}

\end{document}